\theoremstyle{plain}
\newtheorem{thm}{\protect\theoremname}
\theoremstyle{definition}
\newtheorem{defn}[thm]{\protect\definitionname}
\theoremstyle{plain}
\newtheorem{prop}[thm]{\protect\propositionname}
\newenvironment{proof}[1][\protect\proofname]{\par
	\normalfont\topsep6\p@\@plus6\p@\relax
	\trivlist
	\itemindent\parindent
	\item[\hskip\labelsep\scshape #1]\ignorespaces
}{%
	\endtrivlist\@endpefalse
}
\providecommand{\proofname}{Proof}
\theoremstyle{plain}
\newtheorem{cor}[thm]{\protect\corollaryname}
\theoremstyle{remark}
\newtheorem{rem}[thm]{\protect\remarkname}
\theoremstyle{plain}
\newtheorem{lem}[thm]{\protect\lemmaname}
\newcommand{\eqref}[1]{(\ref{#1})}
\providecommand{\corollaryname}{Corollary}
\providecommand{\definitionname}{Definition}
\providecommand{\lemmaname}{Lemma}
\providecommand{\propositionname}{Proposition}
\providecommand{\remarkname}{Remark}
\providecommand{\theoremname}{Theorem}
\begin{document}
\title{On the classification of rational K-matrices}
\author{Tamás Gombor}
\address{Lendület Holographic QFT Group, Wigner Research Centre for Physics,
Konkoly-Thege Miklós u. 29-33, 1121 Budapest , Hungary }
\ead{gombor.tamas@wigner.mta.hu}
\begin{abstract}
This paper presents a derivation of the possible residual symmetries
of rational K-matrices which are invertible in the ''classical limit''
(the spectral parameter goes to infinity). This derivation uses only
the boundary Yang-Baxter equation and the asymptotic expansions of
the R-matrices. The result proves the previous assumption of the literature:
if the original and the residual symmetry algebras are $\mathfrak{g}$
and $\mathfrak{h}$ then there exists a Lie-algebra involution of
$\mathfrak{g}$ for which the invariant sub-algebra is $\mathfrak{h}$.
In addition, we study some K-matrices which are not invertible in
the ''classical limit''. It is shown that their symmetry algebra is
not reductive but a semi-direct sum of reductive and solvable Lie-algebras.

\global\let\newpagegood\newpage
\global\let\newpage\relax
\end{abstract}
\noindent{\it Keywords\/}: {rational K-matrix, boundary Yang-Baxter equation, classification}
\maketitle

\section{Introduction}

\global\let\newpage\newpagegood

Integrable physical systems with boundaries can be defined by their
R- and K-matrices \cite{Sklyanin:1988yz}. The R-matrices satisfy
the famous Yang-Baxter equation and the K-matrix is the solution of
the boundary Yang-Baxter equation (bYBE) for a given R-matrix. In
this paper we give a derivation of the possible residual symmetries
of the rational K-matrices. Some explicit forms of rational K-matrices
(where there is no boundary degrees of freedom) were calculated for
the defining representations of the matrix Lie-algebras \cite{MacKay:2001bh}.
Motivated by a classical argument, the derivation uses the following
assumption. If the boundary breaks the bulk symmetry $G$ to $H$
then $G/H$ has to be a symmetric space, which means that there exists
a Lie-group involution for which the subgroup $H$ is invariant. There
are other direct calculations of K-matrices in the defining representation
in \cite{Arnaudon:2003gj,Arnaudon:2004sd}. 

It is well known that a special class of the ratioanal R-matrices
are classified with the representations of Yangian algebras \cite{Drinfeld:1985rx,Loebbert:2016cdm}.
The introduction of boundaries into the theory of Yangians leads to
a whole new class of the so-called reflection or twisted Yangian algebras
\cite{10.1007/BFb0101183,2002RvMaP..14..317M,Guay2016}. The twisted
Yangians are in exact correspondence with the symmetric spaces $G/H$
which are given by a proper involution. The involution can be used
to construct a co-ideal sub-algebra (which is isomorphic to the twisted
Yangian) of the original Yangian. The representations of this co-ideal
sub-algebra can be further used to calculate K-matrices \cite{Delius:2001he,MacKay:2002at}.

In this paper we derive the assumption above - i.e. we prove for K-matrices
(which are invertible in the ''classical limit'') that the residual
symmetry algebra has to be the invariant sub-algebra of a Lie algebra
involution. This derivation use only the boundary Yang-Baxter equation
and the asymptotic expansion of the R- and K-matrices. We give an
example for a K-matrix with non-invertible ''classical limit'' and
we describe its symmetry which is not a reductive Lie-algebra but
a semi-direct sum of a solvable and a reductive algebra.

The paper is organized as follows. In section 2, we introduce the
basic notations. Section 3 contains the main theorem and proofs which
lead to the classification of possible K-matrices with invertible
''classical limit'' when the boundary vector space is one dimensional.
In this section, we also give an example for K-matrices with non-invertible
''classical limit'' and describe its symmetry. In section 4, we extend
the main theorem of section 3 to K-matrices with general finite dimensional
boundary vector spaces.

\section{Notations}

Let $\mathfrak{g}$ be a complex simple Lie algebra with basis $\left\{ X_{A}\right\} $
where $A=1,\dots,\mathrm{dim(\mathfrak{g})}$ and $\left[X_{A},X_{B}\right]=f_{AB}^{C}X_{C}$.
If $\rho^{(i)}:\mathfrak{g}\to\mathrm{End}(\mathbb{C}^{d_{i}})$ is
a faithful representation of $\mathfrak{g}$ then let $V^{(i)}:=\rho^{(i)}(\mathfrak{g})\subset\mathrm{End}(\mathbb{C}^{d_{i}})$.
There is a non-degenerate invariant symmetric bilinear form (metric)
$\left\langle \cdot,\cdot\right\rangle _{i}:\mathrm{End}(\mathbb{C}^{d_{i}})\otimes\mathrm{End}(\mathbb{C}^{d_{i}})\to\mathbb{C}$
for which $\left\langle Y,Z\right\rangle _{i}=\mathrm{Tr}\left(YZ\right)$.
This metric can be used to define the orthogonal complement of $V^{(i)}$
in $\mathrm{End}(\mathbb{C}^{d_{i}})$: $\mathrm{End}(\mathbb{C}^{d_{i}})=V^{(i)}\oplus\bar{V}^{(i)}$
for which $\left\langle Y,\bar{Y}\right\rangle _{i}=0$ for all $Y\in V^{(i)}$
and $\bar{Y}\in\bar{V}^{(i)}$. We can choose a basis in $V^{(i)}$
and $\bar{V}^{(i)}$. The $\left\{ Y_{A}^{(i)}=\rho^{(i)}(X_{A})\right\} $
is the obvious choice for $V^{(i)}$ and let $\left\{ \bar{Y}_{\bar{A}}^{(i)}\right\} $
be a basis of $\bar{V}^{(i)}$ where $A=1,\dots,\mathrm{dim(\mathfrak{g})}$
and $\bar{A}=\mathrm{dim(\mathfrak{g})}+1,\dots,d^{2}$. The action
of the commutator and the metric on the basis elemets can be written
as
\begin{eqnarray*}
\left[Y_{A}^{(i)},Y_{B}^{(i)}\right] & =f_{AB}^{C}Y_{C}^{(i)} & \left[Y_{A}^{(i)},\bar{Y}_{\bar{B}}^{(i)}\right]=\bar{f}_{A\bar{B}}^{\bar{C}}\bar{Y}_{\bar{C}}^{(i)}\\
\left[\bar{Y}_{\bar{A}}^{(i)},\bar{Y}_{\bar{B}}^{(i)}\right] & =\bar{f}_{\bar{A}\bar{B}}^{C}Y_{C}^{(i)}+\bar{f}_{\bar{A}\bar{B}}^{\bar{C}}\bar{Y}_{\bar{C}}^{(i)}
\end{eqnarray*}
and
\[
C_{AB}^{(i)}=\left\langle Y_{A}^{(i)},Y_{B}^{(i)}\right\rangle _{i}\qquad C_{A\bar{B}}^{(i)}=\left\langle Y_{A}^{(i)},\bar{Y}_{\bar{B}}^{(i)}\right\rangle _{i}=0\qquad C_{\bar{A}\bar{B}}^{(i)}=\left\langle \bar{Y}_{\bar{A}}^{(i)},\bar{Y}_{\bar{B}}^{(i)}\right\rangle _{i}
\]
The metric $C_{AB}^{(i)}$ is proportional to the Killing form $B_{AB}=\mathrm{Tr}\left(\mathrm{ad}_{X_{A}}\circ\mathrm{ad}_{X_{B}}\right)$
i.e. $C_{AB}^{(i)}=c^{(i)}B_{AB}$. Let $B^{AB}$ be the inverse of
$B_{AB}$ i.e. $B^{AB}B_{BC}=\delta_{C}^{A}$. If $\rho^{(1)}$ and
$\rho^{(2)}$ are some representations of $\mathfrak{g}$ then we
can define a useful matrix $C^{(12)}=B^{AB}\rho^{(1)}(X_{A})\otimes\rho^{(2)}(X_{B})$
which is invariant under the action of $\mathfrak{g}$:
\[
\left[C^{(12)},\rho^{(1)}(X)\otimes1+1\otimes\rho^{(2)}(X)\right]=0.
\]

Let $\mathfrak{h}$ be a subalgebra of $\mathfrak{g}$ then ($\mathfrak{g}$,$\mathfrak{h}$)
is a symmetric pair if there exists a Lie-algebra involution $\alpha$
for which $\alpha(X)=X$ for all $X\in\mathfrak{h}$. If $\mathfrak{f}$
is the orthogonal complement of $\mathfrak{h}$ in $\mathfrak{g}$
then $\mathfrak{g}=\mathfrak{h}\oplus\mathfrak{f}$ is a $\mathbb{Z}_{2}$
graded decomposition i.e.
\[
\left[\mathfrak{h},\mathfrak{h}\right]\subseteq\mathfrak{h}\qquad\left[\mathfrak{h},\mathfrak{f}\right]\subseteq\mathfrak{f}\qquad\left[\mathfrak{f},\mathfrak{f}\right]\subseteq\mathfrak{h}
\]
We can choose a basis in $\mathfrak{h}$ and $\mathfrak{f}$: $\left\{ X_{a}\right\} $
and $\left\{ X_{\alpha}\right\} $ respectively where $a=1,\dots,\mathrm{dim(\mathfrak{h})}$
and $\alpha=\mathrm{dim}(\mathfrak{h})+1,\dots,\mathrm{dim(\mathfrak{g})}$.
Using this basis, one can define $C^{(\mathfrak{h},12)}=B^{ab}\rho^{(1)}(X_{a})\otimes\rho^{(2)}(X_{b})$
and $C^{(\mathfrak{f},12)}=B^{\alpha\beta}\rho^{(1)}(X_{\alpha})\otimes\rho^{(2)}(X_{\beta})$
for which $C^{(12)}=C^{(\mathfrak{h},12)}+C^{(\mathfrak{f},12)}$.
The quadratic Casimir of $\mathfrak{h}$ is $c^{(\mathfrak{h},1)}=B^{ab}\rho^{(1)}\left(X_{a}\right)\rho^{(1)}\left(X_{b}\right)$.

Let $\rho^{(B)}:\mathfrak{h}\to\mathrm{End}(\mathbb{C}^{d_{B}})$
be a representation of subalgebra $\mathfrak{h}$ then $C^{(\mathfrak{h},1B)}=B^{ab}\rho^{(1)}(X_{a})\otimes\rho^{(B)}(X_{b})$.

Let $W^{(i)}:=\rho^{(i)}(\mathfrak{h})\subset\mathrm{End}(\mathbb{C}^{d_{i}})$.
Obviously $W^{(i)}\subseteq V^{(i)}$. The metric can be used to define
the orthogonal complement of $W^{(i)}$ in $\mathrm{End}(\mathbb{C}^{d_{i}})$:
$\mathrm{End}(\mathbb{C}^{d_{i}})=W^{(i)}\oplus\bar{W}^{(i)}$. We
can choose bases $\left\{ Y_{a}^{(i)}=\rho(X_{a})\right\} $ and $\left\{ \bar{Y}_{\bar{a}}^{(i)}\right\} $
in $W^{(i)}$ and $\bar{W}^{(i)}$ respectively where $a=1,\dots,\mathrm{dim(\mathfrak{h})}$
and $\bar{a}=\mathrm{dim(\mathfrak{h})}+1,\dots,d^{2}$.

\begin{defn}
The quasi-classical R-matrix in the representation $\rho^{(i)}\otimes\rho^{(j)}$
is a $R^{(ij)}(u):\mathbb{C}^{d_{i}}\otimes\mathbb{C}^{d_{j}}\to\mathbb{C}^{d_{i}}\otimes\mathbb{C}^{d_{j}}$
spectral parameter ($u\in\mathbb{C}$) dependent linear map which
satisfy the Yang-Baxter equation
\[
R_{12}^{(12)}(u)R_{13}^{(13)}(u+v)R_{23}^{(23)}(v)=R_{23}^{(23)}(v)R_{13}^{(13)}(u+v)R_{12}^{(12)}(u)
\]
and its asymptotic expansion is:
\begin{equation}
R^{(ij)}(u)=1+\frac{1}{u}C^{(ij)}+\mathcal{O}(u^{-2}),\label{eq:assymR}
\end{equation}
where $\rho^{(i)}:\mathfrak{g}\to\mathrm{End}(\mathbb{C}^{d_{i}})$
are some representations for $i=1,2,3$ and 
\begin{eqnarray*}
R_{12}^{(12)}(u) & =R^{(12)}(u)\otimes1 & R_{23}^{(23)}(u)=1\otimes R^{(23)}(u)\\
R_{13}^{(13)}(u) & =P_{23}\left(R^{(13)}(u)\otimes1\right)P_{32}
\end{eqnarray*}
\end{defn}

\begin{defn}
Let $R^{(ij)}(u)\in\mathrm{End}\left(\mathbb{C}^{d_{i}}\otimes\mathbb{C}^{d_{j}}\right)$
be a quasi classical R-matrix in the representation $\rho^{(i)}\otimes\rho^{(j)}$
and $\mathfrak{h}$ is a sub-algebra of $\mathfrak{g}$ where $i,j=1,2$.
The map $K(u):\mathbb{C}^{d_{2}}\otimes\mathbb{C}^{d_{B}}\to\mathbb{C}^{d_{1}}\otimes\mathbb{C}^{d_{B}}$
is a $(\mathfrak{g},\mathfrak{h})$ symmetric K-matrix in the representation
$\left(\rho^{(1)},\rho^{(2)}\right)$ if the following two conditions
are satisfied.
\end{defn}
\begin{itemize}
\item There exists a representation $\rho^{(B)}:\mathfrak{g}\to\mathrm{End}(\mathbb{C}^{d_{B}})$
for which 
\begin{equation}
\rho^{(1)}(X)K(u)-K(u)\rho^{(2)}(X)+\left[\rho^{(B)}(X),K(u)\right]=0\label{eq:symK}
\end{equation}
for all $X\in\mathfrak{h}\subset\mathfrak{g}$.
\item It satisfies the \emph{boundary Yang-Baxter equation} (bYBE):
\end{itemize}
\[
R_{12}^{(11)}(u-v)K_{13}(u)R_{21}^{(12)}(u+v)K_{23}(v)=K_{23}(v)R_{12}^{(12)}(u+v)K_{13}(u)R_{21}^{(22)}(u-v).
\]

The $\mathbb{C}^{d_{B}}$ is the boundary vector space.
\begin{defn}
Let $K(u)$ be a $(\mathfrak{g},\mathfrak{h})$ symmetric K-matrix
with the following asymptotic expansion
\begin{equation}
K(u)=\kappa+\mathcal{O}(u^{-1}).\label{eq:assymK}
\end{equation}
The K-matrix $K(u)$ is called quasi-classical if $\kappa$ is invertible.
\end{defn}
In the following we deal with quasi-classical K-matrices therefore
$d_{1}=d_{2}=d$.

\section{K-matrices with 1 dimensional boundary space}

In this section we investigate K-matrices with one dimensional boundary
space i.e. $d_{B}=1$.

\subsection{Classification of possible quasi-classical K-matrices}
\begin{prop}
\label{lem:Adk}If there exists a quasi classical K-matrix in the
representation $\left(\rho^{(1)},\rho^{(2)}\right)$ then the map
$\mathrm{Ad}_{\kappa}:\mathrm{End}(\mathbb{C}^{d})\to\mathrm{End}(\mathbb{C}^{d})$
is a bijection between $\rho^{(1)}(\mathfrak{g})$ and $\rho^{(2)}(\mathfrak{g})$
i.e. $\mathrm{Ad}_{\kappa}(\rho^{(2)}(X_{A}))=M_{A}^{B}\rho^{(1)}(X_{B})$
where $\mathbf{M}$ is invertible, and $\mathbf{M}^{2}=1$ ($\left(\mathbf{M}\right)_{A}^{B}=M_{A}^{B}$). 
\end{prop}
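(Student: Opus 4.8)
The plan is to insert the asymptotic expansions $R^{(ij)}(w)=1+w^{-1}C^{(ij)}+\mathcal O(w^{-2})$ and $K(u)=\kappa+\mathcal O(u^{-1})$ into the boundary Yang--Baxter equation with $d_B=1$ (so that $K(u)\in\mathrm{End}(\mathbb C^{d})$ and the bYBE is an identity in $\mathrm{End}(\mathbb C^{d}\otimes\mathbb C^{d})$), and to read off the constraints on $\kappa$ order by order in the spectral parameters. At the leading order every R-matrix becomes the identity and $K(u),K(v)\to\kappa_{[1]},\kappa_{[2]}$, so the equation holds trivially because the two factors act on different spaces. The content therefore sits at the first subleading order, which I would isolate by sending $v\to\infty$ with $u$ fixed and keeping the coefficient of $1/v$. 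Using $\tfrac1{u-v}=-\tfrac1v+\mathcal O(v^{-2})$ and $\tfrac1{u+v}=\tfrac1v+\mathcal O(v^{-2})$, the $v^{-1}$ terms produce a linear relation between $K_{[1]}(u)$, the subleading coefficient of $K$ in the second space, and the four split Casimirs $C^{(11)}_{12}$, $C^{(12)}_{21}$, $C^{(12)}_{12}$, $C^{(22)}_{21}$. The terms carrying the unknown subleading coefficient of $K$ should cancel between the two sides, leaving a closed relation for $K_{[1]}(u)$ and $\kappa_{[2]}$.

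Next I would send $u\to\infty$ in that relation, so $K_{[1]}(u)\to\kappa_{[1]}$ and the whole identity collapses to a purely algebraic equation for $\kappa\otimes\kappa$ and the split Casimirs. Writing $\mathcal K=\kappa\otimes\kappa$ and conjugating through by $\mathcal K^{-1}$, I expect this to take the schematic form $(\mathrm{Ad}_\kappa\otimes\mathrm{Ad}_\kappa)\big(C^{(22)}\big)=C^{(11)}$, once the mixed terms $C^{(12)}$ are removed with the help of the $\mathfrak g$-invariance $[C^{(12)},\rho^{(1)}(X)\otimes1+1\otimes\rho^{(2)}(X)]=0$ recorded in Section 2. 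Expanding the left-hand side with $C^{(22)}=B^{AB}\rho^{(2)}(X_A)\otimes\rho^{(2)}(X_B)$ shows that it equals $B^{AB}\big(\kappa\rho^{(2)}(X_A)\kappa^{-1}\big)\otimes\big(\kappa\rho^{(2)}(X_B)\kappa^{-1}\big)$, so the relation is genuinely a statement about $\mathrm{Ad}_\kappa$ acting on the representation.

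To extract that statement I would use the trace (projection) property of the split Casimir: pairing $C^{(i2)}=B^{AB}\rho^{(i)}(X_A)\otimes\rho^{(2)}(X_B)$ against $1\otimes\rho^{(2)}(X)$ and taking the partial trace over the second factor returns, up to the constant $c^{(2)}$, the element $\rho^{(i)}(X)$, because $\mathrm{Tr}(\rho^{(2)}(X_B)\rho^{(2)}(X))$ reproduces the Killing form and inverts $B^{AB}$. Applying this projection to the algebraic identity forces $\kappa\rho^{(2)}(X)\kappa^{-1}$ to be a linear combination of the $\rho^{(1)}(X_B)$, i.e. $\mathrm{Ad}_\kappa\big(\rho^{(2)}(X_A)\big)=M_A^{B}\rho^{(1)}(X_B)$, which defines $\mathbf M$. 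Since $\rho^{(1)},\rho^{(2)}$ are faithful and $\mathrm{Ad}_\kappa$ is an automorphism of $\mathrm{End}(\mathbb C^{d})$, the subspaces $\rho^{(2)}(\mathfrak g)$ and $\rho^{(1)}(\mathfrak g)$ have equal dimension $\dim\mathfrak g$ and $\mathbf M$ is invertible, which already yields the asserted bijection.

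The involution property $\mathbf M^2=1$ is the step I expect to be the real obstacle, since the argument so far produces only the direction $\rho^{(2)}(\mathfrak g)\to\rho^{(1)}(\mathfrak g)$. To close it I would run the same two limits in the opposite order, which is precisely the symmetry of the bYBE under $u\leftrightarrow v$ together with $1\leftrightarrow2$, to obtain the companion relation $\mathrm{Ad}_\kappa\big(\rho^{(1)}(X_A)\big)=N_A^{B}\rho^{(2)}(X_B)$; composing the two gives $\mathrm{Ad}_{\kappa^{2}}\big(\rho^{(2)}(X_A)\big)=(\mathbf M\mathbf N)_A^{C}\rho^{(2)}(X_C)$. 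It then remains to show that $\mathbf N=\mathbf M$, from the symmetry of the derivation, and that $\kappa^{2}$ acts trivially on $\rho^{(2)}(\mathfrak g)$, either from Schur's lemma once $\kappa^{2}$ is seen to commute with $\rho^{(2)}(\mathfrak g)$ or directly from the algebraic identity together with the non-degeneracy of the Killing form; these two facts give $\mathbf M^{2}=1$. The bookkeeping of which representation sits on which auxiliary space inside each $C^{(ij)}$, and the verification that the $C^{(12)}$ cross terms cancel rather than obstruct, is the most delicate part of the whole computation.
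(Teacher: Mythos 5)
Your overall plan (expand the bYBE at large spectral parameters, extract algebraic constraints on $\kappa$, then use the trace pairing against $\rho^{(1)}(X_C)$ to force $\mathrm{Ad}_\kappa(\rho^{(2)}(X_A))=M_A^B\rho^{(1)}(X_B)$ and invertibility of $\mathbf M$) is the right one, and the projection step coincides with the paper's. But the step you flag as delicate is where the argument genuinely breaks. Taking the limits \emph{iteratively} --- first $v\to\infty$ at order $1/v$, then $u\to\infty$ --- loses exactly the information you need: in that limit both $1/(u-v)$ and $1/(u+v)$ degenerate to $\mp 1/v$, so instead of two constraints you obtain only \emph{one} equation, the fixed linear combination
\begin{equation*}
\kappa_1 C^{(21)}\kappa_2-\kappa_2 C^{(12)}\kappa_1 \;=\; C^{(11)}\kappa_1\kappa_2-\kappa_1\kappa_2C^{(22)}.
\end{equation*}
In particular, your claim that the mixed $C^{(12)}$, $C^{(21)}$ terms ``cancel with the help of $\mathfrak g$-invariance,'' leaving $(\mathrm{Ad}_\kappa\otimes\mathrm{Ad}_\kappa)(C^{(22)})=C^{(11)}$, is unjustified and false in general: the mixed terms are not removable; they carry an independent constraint. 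The paper avoids this by scaling both parameters \emph{simultaneously}, $u\mapsto u/x$, $v\mapsto v/x$, $x\to0$ with $u,v$ finite; then $1/(u-v)$ and $1/(u+v)$ are linearly independent functions of two variables, so their constant matrix coefficients must vanish separately, yielding the two equations $C^{(11)}\kappa_1\kappa_2=\kappa_1\kappa_2C^{(22)}$ and $\kappa_1C^{(21)}\kappa_2=\kappa_2C^{(12)}\kappa_1$.

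This loss is fatal precisely for $\mathbf M^2=1$. In the paper, the diagonal constraint gives $\mathbf B\mathbf M^{T}\mathbf B^{-1}\mathbf M=1$ (hence invertibility and $\mathbf N=0$), while the mixed constraint gives $\mathbf M=\mathbf B\mathbf M^{T}\mathbf B^{-1}$; the involution property is exactly the conjunction of the two. Your substitute route --- rerunning the limits with $u\leftrightarrow v$ and $1\leftrightarrow2$ to get a companion relation $\mathrm{Ad}_\kappa(\rho^{(1)}(X_A))=N_A^B\rho^{(2)}(X_B)$ --- does not work: the bYBE is not symmetric under that exchange, because the single K-matrix appearing in it intertwines $\rho^{(2)}$ with $\rho^{(1)}$ and no object intertwining the other way is available; what invertibility of $\mathbf M$ gives you for free is a statement about $\mathrm{Ad}_{\kappa^{-1}}$, not about $\mathrm{Ad}_{\kappa}$. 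Likewise, ``$\kappa^2$ acts trivially on $\rho^{(2)}(\mathfrak g)$'' is equivalent to $\mathbf M^2=1$, so invoking it is circular. Incidentally, the single combined equation above \emph{is} in fact strong enough to prove the proposition, but by a different argument than anything in your proposal: writing it as $B^{AB}\bigl(\rho^{(1)}(X_A)-\mathrm{Ad}_\kappa\rho^{(2)}(X_A)\bigr)\otimes\bigl(\rho^{(1)}(X_B)+\mathrm{Ad}_\kappa\rho^{(2)}(X_B)\bigr)=0$, one deduces that the Casimir element of $\mathfrak g\otimes\mathfrak g$ lies in $\mathfrak h_0\otimes\mathfrak g+\mathfrak g\otimes\mathfrak f_0$, where $\mathfrak h_0$ and $\mathfrak f_0$ are the subspaces on which $\mathrm{Ad}_\kappa\circ\rho^{(2)}=\pm\rho^{(1)}$; non-degeneracy of the Killing form then forces $\mathfrak h_0^{\perp}\subseteq\mathfrak f_0$, and a dimension count gives $\mathfrak g=\mathfrak h_0\oplus\mathfrak f_0$, i.e.\ $\mathbf M^2=1$. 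Either supply that argument, or adopt the paper's simultaneous scaling; as written, the proposal does not establish the involution property.
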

\begin{proof}
At first, we use the bYBE
\begin{eqnarray}
R_{12}^{(11)}\left(\frac{u-v}{x}\right)K_{1}\left(\frac{u}{x}\right)R_{21}^{(12)}\left(\frac{u+v}{x}\right)K_{2}\left(\frac{v}{x}\right) & =\nonumber \\
K_{2}\left(\frac{v}{x}\right)R_{12}^{(12)}\left(\frac{u+v}{x}\right)K_{1}\left(\frac{u}{x}\right)R_{21}^{(22)}\left(\frac{u-v}{x}\right)\label{eq:bYBE1}
\end{eqnarray}
in the $x\to0$ limit. The first non trivial term is:
\begin{equation}
\frac{x}{u-v}\left(C^{(11)}\kappa_{1}\kappa_{2}-\kappa_{1}\kappa_{2}C^{(22)}\right)+\frac{x}{u+v}\left(\kappa_{1}C^{(21)}\kappa_{2}-\kappa_{2}C^{(12)}\kappa_{1}\right)=\mathcal{O}(x^{2}),\label{eq:classlimit}
\end{equation}
where we used the expansions \eqref{eq:assymR},\eqref{eq:assymK}.
This is the \emph{classical boundary Yang-Baxter equation} (cbYBE).
The above equation is equivalent to two constraints on $\kappa$:
\begin{eqnarray}
C^{(11)}\kappa_{1}\kappa_{2} & = & \kappa_{1}\kappa_{2}C^{(22)}\label{eq:Ckk}\\
\kappa_{1}C^{(21)}\kappa_{2} & = & \kappa_{2}C^{(12)}\kappa_{1}\label{eq:kCk}
\end{eqnarray}
Equation \eqref{eq:Ckk} can be written in the following form: 
\begin{equation}
B^{AB}Y_{A}^{(1)}\otimes Y_{B}^{(1)}=B^{AB}\kappa Y_{A}^{(2)}\kappa^{-1}\otimes\kappa Y_{B}^{(2)}\kappa^{-1}=B^{AB}\mathrm{Ad}_{\kappa}(Y_{A}^{(2)})\otimes\mathrm{Ad}_{\kappa}(Y_{B}^{(2)})\label{eq:lemma11}
\end{equation}
From this, we can derive that $\mathrm{Ad}_{\kappa}$ is a bijection
between $\rho^{(1)}(\mathfrak{g})$\textit{ }and\textit{ }$\rho^{(2)}(\mathfrak{g})$.
The action of $\mathrm{Ad}_{\kappa}$ on $\rho^{(2)}(\mathfrak{g})$
can be written as follows 
\[
\mathrm{Ad}_{\kappa}(Y_{A}^{(2)})=M_{A}^{B}Y_{B}^{(1)}+N_{A}^{\bar{B}}\bar{Y}_{\bar{B}}^{(1)}.
\]
Appling the operator $1\otimes\left\langle Y_{C}^{(1)},\cdot\right\rangle _{1}$
on \eqref{eq:lemma11} we obtain that
\[
B^{AB}C_{BC}^{(1)}Y_{A}^{(1)}=B^{AB}\left(M_{A}^{D}Y_{D}^{(1)}+N_{A}^{\bar{D}}\bar{Y}_{\bar{D}}^{(1)}\right)M_{B}^{E}C_{EC}^{(1)}.
\]
Since $C_{AB}^{(1)}=c^{(1)}B_{AB}$ 
\[
Y_{C}^{(1)}=B^{AB}\left(M_{A}^{D}Y_{D}^{(1)}+N_{A}^{\bar{D}}\bar{Y}_{\bar{D}}^{(1)}\right)M_{B}^{E}B_{EC}.
\]
From this, we obtain two constrains for $\mathbf{M}$ and $\mathbf{N}$:
\begin{eqnarray*}
\mathbf{B}\mathbf{M}^{T}\mathbf{B}^{-1}\mathbf{M} & = & 1,\\
\mathbf{B}\mathbf{M}^{T}\mathbf{B}^{-1}\mathbf{N} & = & 0.
\end{eqnarray*}
It follows from the first that $\mathbf{M}$ is invertible and $\mathbf{M}^{-1}=\mathbf{B}\mathbf{M}^{T}\mathbf{\mathbf{B}}^{-1}$.
Using this in the second equation, we obtain that $\mathbf{N}=0$.
Therefore 
\begin{equation}
\mathrm{Ad}_{\kappa}(Y_{A}^{(2)})=M_{A}^{B}Y_{B}^{(1)}\label{eq:Adk}
\end{equation}
which implies that $\mathrm{Ad}_{\kappa}$ is a bijection between
$\rho^{(1)}(\mathfrak{g})$ and $\rho^{(2)}(\mathfrak{g})$.

The equation \eqref{eq:kCk} can be written in the following form:
\[
B^{AB}\mathrm{Ad}_{\kappa}(Y_{A}^{(2)})\otimes Y_{B}^{(1)}=B^{AB}Y_{A}^{(1)}\otimes\mathrm{Ad}_{\kappa}(Y_{B}^{(2)}).
\]
Using \eqref{eq:Adk} and applying $1\otimes\left\langle Y_{C}^{(1)},\cdot\right\rangle _{1}$
\[
M_{C}^{D}Y_{D}^{(1)}=B^{AB}Y_{A}^{(1)}M_{B}^{E}B_{EC}.
\]
Therefore 
\[
\mathbf{M}=\mathbf{B}\mathbf{M}^{T}\mathbf{B}^{-1},
\]
but we have seen above that $\mathbf{M}^{-1}=\mathbf{B}\mathbf{M}^{T}\mathbf{B}^{-1}$
therefore $\mathbf{M}^{2}=1$.
\end{proof}
\begin{cor}
\label{cor:inv}If there exists a quasi classical K-matrix in the
representation $\left(\rho^{(1)},\rho^{(2)}\right)$ then there exists
a Lie algebra involution $\alpha:\mathfrak{g}\to\mathfrak{g}$, $\alpha^{2}=\mathrm{id}_{\mathfrak{g}}$
for which \textup{$\mathrm{Ad}_{\kappa}(\rho^{(2)}(X))=\rho^{(1)}(\alpha(X))$}.
\end{cor}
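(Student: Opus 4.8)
The plan is to leverage Proposition~\ref{lem:Adk} directly and show that the map it produces is in fact a Lie-algebra involution. By the Proposition we already know that $\mathrm{Ad}_{\kappa}$ restricts to a bijection between $\rho^{(2)}(\mathfrak{g})$ and $\rho^{(1)}(\mathfrak{g})$ satisfying $\mathrm{Ad}_{\kappa}(Y_{A}^{(2)})=M_{A}^{B}Y_{B}^{(1)}$ with $\mathbf{M}^{2}=1$. First I would use the faithfulness of the representations $\rho^{(1)}$ and $\rho^{(2)}$: since both are injective, the composition $\alpha := (\rho^{(1)})^{-1}\circ\mathrm{Ad}_{\kappa}\circ\rho^{(2)}$ is a well-defined linear map $\mathfrak{g}\to\mathfrak{g}$, characterized on the basis by $\alpha(X_{A})=M_{A}^{B}X_{B}$. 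The defining relation $\mathrm{Ad}_{\kappa}(\rho^{(2)}(X))=\rho^{(1)}(\alpha(X))$ then holds by construction for every $X\in\mathfrak{g}$.

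It remains to verify the two algebraic properties: that $\alpha$ is an involution and that it respects the bracket. For the involution property, I would apply $\mathrm{Ad}_{\kappa}$ twice and translate $\mathbf{M}^{2}=1$ into $\alpha^{2}=\mathrm{id}_{\mathfrak{g}}$; this is immediate from the matrix identity already established. The homomorphism property is the substantive point. Here I would use that $\mathrm{Ad}_{\kappa}$, as conjugation by an invertible matrix, is automatically an associative-algebra automorphism of $\mathrm{End}(\mathbb{C}^{d})$, hence preserves commutators:
\[
\mathrm{Ad}_{\kappa}\left(\left[Y_{A}^{(2)},Y_{B}^{(2)}\right]\right)=\left[\mathrm{Ad}_{\kappa}(Y_{A}^{(2)}),\mathrm{Ad}_{\kappa}(Y_{B}^{(2)})\right].
\]
Expanding the left side with the structure constants $\left[Y_{A}^{(2)},Y_{B}^{(2)}\right]=f_{AB}^{C}Y_{C}^{(2)}$ and the right side with \eqref{eq:Adk}, and then using that $\rho^{(1)}$ is faithful so the $Y_{C}^{(1)}$ inherit the same structure constants, yields $\alpha\left(\left[X_{A},X_{B}\right]\right)=\left[\alpha(X_{A}),\alpha(X_{B})\right]$ on basis elements, which extends to all of $\mathfrak{g}$ by bilinearity.

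\textbf{The main obstacle} is essentially bookkeeping rather than conceptual: one must be careful that $\mathrm{Ad}_{\kappa}$ lands inside $\rho^{(1)}(\mathfrak{g})$ so that the inverse image under $\rho^{(1)}$ makes sense and the homomorphism computation stays entirely within the image of the representations. This is precisely what Proposition~\ref{lem:Adk} guarantees (the vanishing $\mathbf{N}=0$ established there ensures no $\bar{Y}^{(1)}$ components appear), so the key work has already been done. The only genuine subtlety is that $\alpha$ is a priori defined as a map on $\mathfrak{g}$ through two faithful representations; once faithfulness is invoked, closure under the bracket on $\rho^{(2)}(\mathfrak{g})$ transports cleanly to $\mathfrak{g}$, and an involutive Lie-algebra automorphism is exactly a Lie-algebra involution, completing the proof.
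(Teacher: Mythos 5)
Your proposal is correct and follows essentially the same route as the paper: both define $\alpha=\left(\rho^{(1)}\right)^{-1}\circ\mathrm{Ad}_{\kappa}\circ\rho^{(2)}$, justify its existence via Proposition~\ref{lem:Adk} (the vanishing of $\mathbf{N}$) and the faithfulness of $\rho^{(1)}$, and deduce $\alpha^{2}=\mathrm{id}_{\mathfrak{g}}$ from $\mathbf{M}^{2}=1$. The only difference is that you spell out the bracket-preservation argument (conjugation is an algebra automorphism, so structure constants transport through the faithful representations), which the paper leaves implicit.
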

\begin{proof}
Since $\mathrm{Ad}_{\kappa}(\rho^{(2)}(X))\in\rho^{(1)}(\mathfrak{g})$
for all $X\in\mathfrak{g}$ and $\rho^{(1)}$ is faithful, the map
$\alpha=\left(\rho^{(1)}\right)^{-1}\circ\mathrm{Ad}_{\kappa}\circ\rho^{(2)}:\mathfrak{g}\to\mathfrak{g}$
exists and it is a Lie algebra automorphism. From \eqref{eq:Adk}
we obtain that $\alpha(X_{A})=M_{A}^{B}X_{B}$. We also saw that $\mathbf{M}^{2}=1$,
therefore $\alpha^{2}=\mathrm{id}_{\mathfrak{g}}$ i.e. $\alpha$
is a Lie algebra involution.

From the definition of $\alpha$ we obtain that $\alpha(X)=\left(\rho^{(1)}\right)^{-1}\left(\mathrm{Ad}_{\kappa}\left(\rho^{(2)}(X)\right)\right)$
therefore 
\begin{equation}
\mathrm{Ad}_{\kappa}(\rho^{(2)}(X))=\rho^{(1)}(\alpha(X))\label{eq:Adkalp}
\end{equation}
for all $X\in\mathfrak{g}$.
\end{proof}
\begin{rem}
The equation \eqref{eq:Adkalp} can be written as
\[
\rho^{(2)}=\mathrm{Ad}_{\kappa^{-1}}\circ\rho^{(1)}\circ\alpha
\]
Therefore if we choose an arbitrary representation $\rho^{(1)}$ and
a Lie algebra involution $\alpha$ then the equation above fixes $\rho^{(2)}$.
Let us choose $\rho^{(1)}=\rho$ then there are two possibilities
for $\rho^{(2)}$.
\end{rem}
\begin{enumerate}[label=(\roman*)]
\item  There exists $V\in\mathrm{Aut}(\mathbb{C}^{d})$ such that $\rho\circ\alpha=\mathrm{Ad}_{V^{-1}}\circ\rho$.
This $V$ exists for all inner and some outer involutions. Using $V$,
the equation \eqref{eq:Adkalp} reads as
\[
\rho^{(2)}=\mathrm{Ad}_{\left(V\kappa\right)^{-1}}\circ\rho
\]
which means $\rho$ and $\rho^{(2)}$ are equivalent representations,
it is therefore advisable to choose a basis where $\rho^{(2)}=\rho$.
In this basis
\[
\rho(\alpha(X))=\mathrm{Ad}_{\kappa}(\rho(X))=\kappa\rho(X)\kappa^{-1},
\]
Therefore these K-matrices belong to the usual untwisted boundary
Yang-Baxter equation:
\[
R_{12}(u-v)K_{1}(u)R_{21}(u+v)K_{2}(v)=K_{2}(v)R_{12}(u+v)K_{1}(u)R_{21}(u-v),
\]
where $R(u)$ is the R-matrix in the $\rho\otimes\rho$ representation.
\item There is no $V\in\mathrm{Aut}(\mathbb{C}^{d})$ such that $\rho\circ\alpha=\mathrm{Ad}_{V^{-1}}\circ\rho$
which is true for some outer involutions. These belong to the $\mathbb{Z}_{2}$
automorphisms of the Dynkin-diagrams of the Lie-algebras, e.g. in
the case of the $A_{n}$ series, these connect the representations
to their contra-gradient representations, therefore $\rho^{(2)}=\rho_{cg}$
i.e. 
\[
\rho(\alpha(X))=\mathrm{Ad}_{\kappa}(\rho_{cg}(X))=-\kappa\rho(X)^{T}\kappa^{-1},
\]
for all $X\in\mathfrak{g}$. Therefore these K-matrices belong to
the twisted boundary Yang-Baxter equation:
\[
R_{12}(u-v)K_{1}(u)\bar{R}_{21}(u+v)K_{2}(v)=K_{2}(v)\bar{R}_{12}(u+v)K_{1}(u)R_{21}(u-v),
\]
where $\bar{R}(u)$ is the crossed R-matrix of $R(u)$:
\[
\bar{R}(u)=R(\Gamma-u)^{T_{1}},
\]
where $\Gamma$ is the crossing parameter.
\end{enumerate}
\begin{thm}
Let $K(u)$ be a quasi classical $(\mathfrak{g},\mathfrak{h})$ symmetric
K-matrix in the representation $\left(\rho^{(1)},\rho^{(2)}\right)$
then $(\mathfrak{g},\mathfrak{h})$ is a symmetric pair.
\end{thm}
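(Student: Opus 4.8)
The plan is to combine the involution $\alpha$ produced by Corollary \ref{cor:inv} with the defining symmetry relation \eqref{eq:symK} and to read off that $\alpha$ fixes $\mathfrak{h}$ pointwise. Since Corollary \ref{cor:inv} already guarantees that $\alpha$ is a genuine Lie-algebra involution of $\mathfrak{g}$, establishing $\alpha(X)=X$ for all $X\in\mathfrak{h}$ is exactly the assertion that $(\mathfrak{g},\mathfrak{h})$ is a symmetric pair in the sense defined in Section 2.

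First I would exploit the hypothesis $d_{B}=1$. On a one-dimensional boundary space each $\rho^{(B)}(X)$ is a scalar, so the operator $1\otimes\rho^{(B)}(X)$ is central and the commutator term in \eqref{eq:symK} drops out. The symmetry condition then collapses to
\[
\rho^{(1)}(X)K(u)=K(u)\rho^{(2)}(X)\qquad\forall\,X\in\mathfrak{h}.
\]

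Next I would pass to the leading asymptotic coefficient. Inserting \eqref{eq:assymK} and keeping the order $u^{0}$ term gives $\rho^{(1)}(X)\kappa=\kappa\,\rho^{(2)}(X)$; since $\kappa$ is invertible by quasi-classicality this is precisely $\mathrm{Ad}_{\kappa}(\rho^{(2)}(X))=\rho^{(1)}(X)$ for all $X\in\mathfrak{h}$. Comparing with the identity $\mathrm{Ad}_{\kappa}(\rho^{(2)}(X))=\rho^{(1)}(\alpha(X))$ of \eqref{eq:Adkalp}, which holds on all of $\mathfrak{g}$, yields $\rho^{(1)}(\alpha(X))=\rho^{(1)}(X)$ for every $X\in\mathfrak{h}$. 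Because $\rho^{(1)}$ is faithful this forces $\alpha(X)=X$ on $\mathfrak{h}$, and as $\alpha^{2}=\mathrm{id}_{\mathfrak{g}}$ the pair $(\mathfrak{g},\mathfrak{h})$ is symmetric.

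The deduction is short because the structural work was already carried out in Proposition \ref{lem:Adk} and Corollary \ref{cor:inv}; the one genuinely new input is the reduction of the boundary symmetry condition. Accordingly the step I expect to require the most care is the vanishing of the commutator term: one must be certain that it is the one-dimensionality of $\mathbb{C}^{d_{B}}$ that trivialises $1\otimes\rho^{(B)}(X)$, and not some accidental feature of $\rho^{(B)}$, because in the higher-dimensional boundary case treated in Section 4 this term survives and the argument has to be reworked.
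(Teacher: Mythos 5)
Your first inclusion is sound: with $d_{B}=1$ the term $\left[\rho^{(B)}(X),K(u)\right]$ in \eqref{eq:symK} indeed vanishes, the leading order of \eqref{eq:assymK} gives $\rho^{(1)}(X)\kappa=\kappa\rho^{(2)}(X)$, and comparison with \eqref{eq:Adkalp} plus faithfulness of $\rho^{(1)}$ yields $\alpha(X)=X$ for all $X\in\mathfrak{h}$. This reproduces the first part of the paper's proof. But it only shows $\mathfrak{h}\subseteq\mathfrak{h}_{0}$, where $\mathfrak{h}_{0}$ is the fixed-point subalgebra of $\alpha$, and that is strictly weaker than the theorem. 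In this paper ``$(\mathfrak{g},\mathfrak{h})$ is a symmetric pair'' is meant as: $\mathfrak{h}$ \emph{equals} the invariant subalgebra of an involution (see the abstract, ``the invariant sub-algebra is $\mathfrak{h}$'', and the $\mathbb{Z}_{2}$-grading stated after the definition in Section 2, which takes $\mathfrak{f}=\mathfrak{h}^{\perp}$ and fails for a proper subalgebra of $\mathfrak{h}_{0}$ --- for instance, if $\mathfrak{h}$ is spanned by a single fixed vector of $\alpha$, one does not get $\left[\mathfrak{f},\mathfrak{f}\right]\subseteq\mathfrak{h}$). Reading the definition as mere containment, as your opening paragraph does, would make the theorem nearly contentless and would not support the subsequent uniqueness theorem, which identifies $\mathfrak{h}$ with the full fixed algebra of $\alpha$.

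The missing --- and substantive --- half is the reverse inclusion $\mathfrak{h}_{0}\subseteq\mathfrak{h}$: every fixed point of $\alpha$ must be a symmetry of the full $K(u)$, not just of $\kappa$. Your tools cannot produce this; \eqref{eq:symK} has already been used, and leading-order data constrain only $\kappa$. The paper obtains it from the bYBE in the $v\to\infty$ limit: the order $1/v$ term (after inserting \eqref{eq:assymR} and \eqref{eq:assymK}) gives
\[
B^{AB}\left(K(u)\rho^{(2)}(X_{A})-\rho^{(1)}(X_{A})K(u)\right)\otimes\left(\rho^{(1)}(X_{B})+\kappa\rho^{(2)}(X_{B})\kappa^{-1}\right)=0.
\]
By \eqref{eq:Adkalp} the second tensor factor equals $\rho^{(1)}\left(X_{B}+\alpha(X_{B})\right)$, which vanishes for $X_{B}\in\mathfrak{f}$ and doubles for $X_{B}\in\mathfrak{h}_{0}$; since the Killing form is block-diagonal with respect to $\mathfrak{g}=\mathfrak{h}_{0}\oplus\mathfrak{f}$, only the $\mathfrak{h}_{0}$-block survives, leaving
\[
B^{ab}\left(K(u)\rho^{(2)}(X_{a})-\rho^{(1)}(X_{a})K(u)\right)\otimes\rho^{(1)}(X_{b})=0,
\]
hence $K(u)\rho^{(2)}(X)=\rho^{(1)}(X)K(u)$ for all $X\in\mathfrak{h}_{0}$, i.e. $\mathfrak{h}_{0}\subseteq\mathfrak{h}$. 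Combined with your inclusion this gives $\mathfrak{h}=\mathfrak{h}_{0}$, which is the actual content of the theorem; without it your argument establishes only that $\mathfrak{h}$ sits inside the fixed algebra of some involution.
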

\begin{proof}
From the previous corollary there exists a Lie algebra involution
$\alpha$, for which\textit{ }$\mathrm{Ad}_{\kappa}(\rho^{(2)}(X))=\rho^{(1)}(\alpha(X))$\textit{.}
This involution can be used for a $\mathbb{Z}_{2}$ graded decomposition:
$\mathfrak{g}=\mathfrak{h}_{0}\oplus\mathfrak{f}$ where $\alpha(X^{(+)})=+X^{(+)}$
and $\alpha(X^{(-)})=-X^{(-)}$ for all $X^{(+)}\in\mathfrak{h}_{0}$
and $X^{(-)}\in\mathfrak{f}$. Therefore $(\mathfrak{g},\mathfrak{h}_{0})$
is a symmetric pair.

Let us assume that $X\in\mathfrak{h}$. This implies that $K(u)\rho^{(2)}(X)=\rho^{(1)}(X)K(u)$
(see equation \eqref{eq:symK}) which reads as $\kappa\rho^{(2)}(X)=\rho^{(1)}(X)\kappa$
in the asymtotic limit. Therefore $\mathrm{Ad}_{\kappa}(\rho^{(2)}(X))=\rho^{(1)}(X)$
which implies that $\alpha(X)=X$ i.e $X\in\mathfrak{h}_{0}$ therefore
$\mathfrak{h}\subseteq\mathfrak{h}_{0}$.

Now, we take the bYBE in the $v\to\infty$ limit:
\begin{eqnarray*}
\frac{1}{v}B^{AB}\left(K(u)\rho^{(2)}(X_{A})-\rho^{(1)}(X_{A})K(u)\right)\otimes\rho^{(1)}(X_{B})\kappa+\mathcal{O}(v^{-2}) & =\\
\frac{1}{v}B^{AB}\left(\rho^{(1)}(X_{A})K(u)-K(u)\rho^{(2)}(X_{A})\right)\otimes\kappa\rho^{(2)}(X_{B})+\mathcal{O}(v^{-2})
\end{eqnarray*}
which implies that
\begin{eqnarray*}
B^{AB}\left(K(u)\rho^{(2)}(X)-\rho^{(1)}(X)K(u)\right)\otimes\left(\rho^{(1)}(X_{B})+\kappa\rho^{(2)}(X_{B})\kappa^{-1}\right) & =\\
=B^{AB}\left(K(u)\rho^{(2)}(X)-\rho^{(1)}(X)K(u)\right)\otimes\rho^{(1)}(X_{B}+\alpha(X_{B})) & = & 0
\end{eqnarray*}
Therefore
\[
B^{ab}\left(K(u)\rho^{(2)}(X_{a})-\rho^{(1)}(X_{a})K(u)\right)\otimes\rho^{(1)}(X_{b})=0
\]
which is equivalent to
\[
K(u)\rho^{(2)}(X)=\rho^{(1)}(X)K(u)
\]
for all $X\in\mathfrak{h}_{0}$ which implies that $\mathfrak{h}_{0}\subseteq\mathfrak{h}$.
We have seen previously that $\mathfrak{h}\subseteq\mathfrak{h}_{0}$
therefore $\mathfrak{h}_{0}=\mathfrak{h}$ i.e. $\left(\mathfrak{g},\mathfrak{h}\right)$
is a symmetric pair.
\end{proof}
\begin{thm}
\label{lem:univ}The $(\mathfrak{g},\mathfrak{h})$ symmetric K-matrix
in the representation $\left(\rho^{(1)},\rho^{(2)}\right)$ is unique
up to a multiplicative scalar function if $\mathfrak{h}$ is semi-simple,
\textup{$\rho^{(1)}$} is irreducible and the matrix $\kappa$ is
fixed. If $\mathfrak{h}$ is not semi-simple but reductive then the
K-matrix may have a free parameter.
\end{thm}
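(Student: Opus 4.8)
The plan is to recast uniqueness as a rigidity property of the commutant of $\rho^{(1)}(\mathfrak{h})$ and then to exploit the boundary Yang--Baxter equation one order deeper in the $v\to\infty$ expansion than in the previous proof, so as to reduce that commutant to the scalars. First I would simplify the symmetry condition: since $d_{B}=1$, $\rho^{(B)}(X)$ is a scalar and the commutator term in \eqref{eq:symK} vanishes identically, so the condition collapses to $\rho^{(1)}(X)K(u)=K(u)\rho^{(2)}(X)$ for all $X\in\mathfrak{h}$. Writing $\tilde{K}(u):=\kappa^{-1}K(u)$ and using Corollary~\ref{cor:inv} together with $\alpha|_{\mathfrak{h}}=\mathrm{id}$ (so that $\kappa^{-1}\rho^{(1)}(X)\kappa=\rho^{(2)}(X)$ for $X\in\mathfrak{h}$), this becomes $[\tilde{K}(u),\rho^{(2)}(X)]=0$, with normalisation $\tilde{K}(u)=1+\mathcal{O}(u^{-1})$ from \eqref{eq:assymK}. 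Thus $\tilde{K}(u)$ lies, for every $u$, in the commutant $\mathcal{C}=\mathrm{End}_{\mathfrak{h}}(\mathbb{C}^{d})$ of $\rho^{(2)}(\mathfrak{h})$.

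Next I would describe $\mathcal{C}$ explicitly. Decomposing $\rho^{(2)}|_{\mathfrak{h}}=\bigoplus_{i}m_{i}W_{i}$ into $\mathfrak{h}$-irreducibles, Schur's lemma gives $\mathcal{C}\cong\bigoplus_{i}\mathrm{End}(\mathbb{C}^{m_{i}})$, so the entire question reduces to asking how large the subset of $\mathcal{C}$ compatible with the bYBE is. The symmetry condition alone does not settle this: the commutant can exceed the scalars even when $\rho^{(1)}$ is $\mathfrak{g}$-irreducible---already for $(\mathfrak{so}_{p+q},\mathfrak{so}_{p}\oplus\mathfrak{so}_{q})$, where $\mathbb{C}^{d}$ splits into two $\mathfrak{h}$-blocks and $\dim\mathcal{C}=2$---so the bYBE must do the remaining work.

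The key step is to extract a genuinely new linear constraint from the bYBE. The $\mathcal{O}(v^{-1})$ term of the $v\to\infty$ expansion only re-encodes the $\mathfrak{h}$-intertwining already used in the previous theorem, so I would push that expansion to $\mathcal{O}(v^{-2})$. At this order the subleading R-matrix data and the subleading coefficient of $K$ at infinity assemble into fixed constants built from $C^{(\mathfrak{h},11)}$ and $C^{(\mathfrak{f},11)}$; after projecting with $1\otimes\langle Y_{C}^{(1)},\cdot\rangle_{1}$ and symmetrising (mimicking the manipulations of the earlier proofs), what survives is a single linear relation $\mathcal{L}[\tilde{K}(u)]=0$ on $\mathcal{C}$ coupling the $\mathfrak{f}$-action to the spectral dependence of $\tilde{K}(u)$. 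Because $\mathfrak{f}$ maps between the blocks $W_{i}$ and, since $\rho^{(1)}$ is $\mathfrak{g}$-irreducible, connects them into a single orbit (otherwise a proper union of blocks would be a $\mathfrak{g}$-submodule), this relation ties together the scalars that $\tilde{K}(u)$ carries on the different blocks.

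The hard part will be solving $\mathcal{L}[\tilde{K}(u)]=0$ inside $\mathcal{C}$ and counting the solutions. When $\mathfrak{h}$ is semisimple there is no central generator separating the blocks, so the gluing produced by $\mathfrak{f}$ forces $\tilde{K}(u)$ to be a single scalar function times the identity; combined with $\tilde{K}(\infty)=1$ this gives $K(u)=f(u)\kappa$, determined by the fixed $\kappa$ up to the scalar function $f$---exactly the freedom under which \eqref{eq:symK} and the bYBE are invariant. When $\mathfrak{h}$ is only reductive, its centre $\mathbb{C}z$ assigns distinct characters to the $\mathfrak{h}$-blocks (equivalently, $\rho^{(1)}(z)$ grades $\mathbb{C}^{d}$ and $\mathfrak{f}$ shifts this grading), and $\mathcal{L}$ degenerates enough to let $\tilde{K}(u)$ act by distinct spectral functions on distinct $z$-eigenspaces; the bYBE then fixes their relative $u$-dependence up to one free constant, yielding the advertised one-parameter family. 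I would confirm this count against the explicit reductive example of the paper. Conceptually, this is the statement that the evaluation module $\mathbb{C}^{d}$ is irreducible over the associated twisted-Yangian coideal when $\mathfrak{h}$ is semisimple, and acquires one extra invariant---the central character---when $\mathfrak{h}$ has a centre.
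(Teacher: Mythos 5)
Your reduction to the commutant is fine as far as it goes: for $d_{B}=1$ the condition \eqref{eq:symK} does collapse to $\rho^{(1)}(X)K(u)=K(u)\rho^{(2)}(X)$, and $\kappa^{-1}K(u)$ then lies in the commutant of $\rho^{(2)}(\mathfrak{h})$. But the key step of your plan --- that for semi-simple $\mathfrak{h}$ the bYBE cuts this commutant down to the scalars, so that $K(u)=f(u)\kappa$ --- is false, and it contradicts the paper itself. The Lemma of Section 4, specialized to $d_{B}=1$ (where $\rho^{(B)}=0$ because a semi-simple $\mathfrak{h}$ has no nontrivial one-dimensional representations), forces
\[
K(u)=\kappa+\frac{1}{u}\left(\frac{1}{2}c^{(\mathfrak{h},1)}+D\right)\kappa+\mathcal{O}(u^{-2}),
\]
and the quadratic Casimir $c^{(\mathfrak{h},1)}$ of $\mathfrak{h}$ in $\rho^{(1)}$ is a \emph{non-scalar} element of the commutant whenever $\rho^{(1)}|_{\mathfrak{h}}$ has irreducible components with different Casimir eigenvalues, which is the generic situation (it is scalar only in special cases such as the defining representation of $\mathfrak{sl}(n)$ restricted to $\mathfrak{so}(n)$). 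The theorem asserts that the coefficients $k^{(r)}$, generally \emph{not} proportional to $\kappa$, are uniquely determined by $\kappa$ and the bYBE --- not that they vanish. So any argument whose conclusion is ``$\kappa^{-1}K(u)$ is a scalar function times the identity'' proves a statement that is simply untrue, and the $\mathfrak{f}$-orbit gluing you invoke cannot be repaired to give it.

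Structurally, your proposal is also missing the inductive mechanism that makes uniqueness provable and that locates the free parameter. The paper takes two normalized solutions agreeing up to order $r-1$ in $1/u$, inserts both into the scaled bYBE \eqref{eq:bYBE1}, and compares leading terms in $x$: the difference $\delta k^{(r)}$ multiplies the four rational functions $\frac{1}{u^{r}(u-v)}$, $\frac{1}{v^{r}(u-v)}$, $\frac{1}{u^{r}(u+v)}$, $\frac{1}{v^{r}(u+v)}$. For $r>1$ these are linearly independent, and one of the resulting constraints forces $\left[\rho^{(1)}(X),\delta k^{(r)}\kappa^{-1}\right]=0$ for all $X\in\mathfrak{g}$, hence $\delta k^{(r)}=0$ by irreducibility and the trace normalization; for $r=1$ the four functions satisfy one linear relation, only three constraints survive, and they pin $\delta k^{(1)}\kappa^{-1}$ down only to lie in $\rho^{(1)}(\mathfrak{g})$ and commute with $\rho^{(1)}(\mathfrak{h})$, i.e.\ in the center of $\mathfrak{h}$. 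This degeneracy at $r=1$ is exactly where the reductive free parameter lives, and your scheme has no counterpart for it: a single relation $\mathcal{L}[\tilde{K}(u)]=0$ extracted at one order of the $v\to\infty$ expansion neither controls all orders $k^{(r)}$ simultaneously nor distinguishes $r=1$ from $r>1$. Moreover $\mathcal{L}$, which carries the entire burden of your proof, is never constructed --- it is a black box standing precisely where the work has to be done.
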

\begin{rem}
The proof of this theorem can be found in \cite{Bittleston:2019gkq}
for a special case when $\mathfrak{g}=\mathfrak{sl}(n)$ and $\rho^{(1)}$
and $\rho^{(2)}$ are the defining representations. The authors of
that paper uses the Sklyanin determinant but we do not use it in the
following modified proof.
\end{rem}
\begin{proof}
Let the asymptotic expansion of $K(u)$ be
\[
K(u)=\kappa+\frac{1}{u}k^{(1)}+\dots+\frac{1}{u^{r}}k^{(r)}+\dots
\]
where $\kappa$ is a fixed. The bYBE is invariant under the scalar
multiplication: $K(u)\to c(u)K(u)$. We can fix this freedom by
\begin{equation}
\mathrm{Tr}\left(K(u)\kappa^{-1}\right)=d.\label{eq:norm-1}
\end{equation}

Let us assume that we have two K-matrices satisfying the bYBE with
the above normalization, and these agree up to order $r-1$. The difference
of the two K-matrices are
\[
\tilde{K}(u)-K(u)=\frac{1}{u^{r}}\delta k^{(r)}+\dots
\]
From the normalization \eqref{eq:norm-1} we obtain that
\begin{equation}
\mathrm{Tr}\left(\delta k^{(r)}\kappa^{-1}\right)=0.\label{eq:norm}
\end{equation}

Substituting $K$ and $\tilde{K}$ into bYBE \eqref{eq:bYBE1} and
subtracting them from each other, the leading non-trivial terms at
lowest order in $x$ are the following:
\begin{eqnarray*}
 & \frac{x^{r+1}}{u^{r}(u-v)} & \left(C^{(11)}\delta k_{1}^{(r)}\kappa_{2}-\delta k_{1}^{(r)}\kappa_{2}C^{(22)}\right)+\\
+ & \frac{x^{r+1}}{v^{r}(u-v)} & \left(C^{(11)}\kappa_{1}\delta k_{2}^{(r)}-\kappa_{1}\delta k_{2}^{(r)}C^{(22)}\right)+\\
+ & \frac{x^{r+1}}{u^{r}(u+v)} & \left(\delta k_{1}^{(r)}C^{(21)}\kappa_{2}-\kappa_{2}C^{(12)}\delta k_{1}^{(r)}\right)+\\
+ & \frac{x^{r+1}}{v^{r}(u+v)} & \left(\kappa_{1}C^{(21)}\delta k_{2}^{(r)}-\delta k_{2}^{(r)}C^{(12)}\kappa_{1}\right)=\mathcal{O}(x^{r+2})
\end{eqnarray*}

The spectral parameter dependent functions 
\[
\frac{1}{u^{r}(u-v)}\qquad\frac{1}{v^{r}(u-v)}\qquad\frac{1}{u^{r}(u+v)}\qquad\frac{1}{v^{r}(u+v)}
\]
are linearly independent for $r>1$ and linearly dependent for $r=1$:
\[
\frac{1}{u(u-v)}-\frac{1}{v(u-v)}+\frac{1}{u(u+v)}+\frac{1}{v(u+v)}=0.
\]
For $r>1$ we have 4 constrains. Let us see the first one:
\[
C^{(11)}\delta k_{1}^{(r)}\kappa_{2}=\delta k_{1}^{(r)}\kappa_{2}C^{(22)}
\]
or in an equivalent form:
\[
C^{(11)}\delta k_{1}^{(r)}=\delta k_{1}^{(r)}\kappa_{2}C^{(22)}\kappa_{2}^{-1}.
\]
Using \eqref{eq:Ckk}, this can be written as
\[
C^{(11)}\delta k_{1}^{(r)}=\delta k_{1}^{(r)}\kappa_{1}^{-1}C^{(11)}\kappa_{1}
\]
i.e.
\[
\left[C^{(11)},\delta k_{1}^{(r)}\kappa_{1}^{-1}\right]=B^{AB}\left[\rho^{(1)}(X_{A}),\delta k^{(r)}\kappa^{-1}\right]\otimes\rho^{(1)}(X_{B})=0,
\]
which implies that
\[
\left[\rho^{(1)}(X),\delta k^{(r)}\kappa^{-1}\right]=0
\]
for all $X\in\mathfrak{g}$. Since $\rho^{(1)}$ is irreducible, $\delta k^{(r)}\kappa^{-1}$
has to be proportional to the identity but from \eqref{eq:norm} we
can see that $\delta k^{(r)}$ has to vanish.

For $r=1$ we have three equations 
\begin{eqnarray}
\left(C^{(11)}\kappa_{1}\delta k_{2}^{(1)}-\kappa_{1}\delta k_{2}^{(1)}C^{(22)}\right)+\left(C^{(11)}\delta k_{1}^{(1)}\kappa_{2}-\delta k_{1}^{(1)}\kappa_{2}C^{(22)}\right) & = & 0\label{eq:r11}\\
\left(C^{(11)}\kappa_{1}\delta k_{2}^{(1)}-\kappa_{1}\delta k_{2}^{(1)}C^{(22)}\right)+\left(\delta k_{1}^{(1)}C^{(21)}\kappa_{2}-\kappa_{2}C^{(12)}\delta k_{1}^{(1)}\right) & = & 0\label{eq:r12}\\
\left(C^{(11)}\kappa_{1}\delta k_{2}^{(1)}-\kappa_{1}\delta k_{2}^{(1)}C^{(22)}\right)+\left(\kappa_{1}C^{(21)}\delta k_{2}^{(1)}-\delta k_{2}^{(1)}C^{(12)}\kappa_{1}\right) & = & 0\label{eq:r13}
\end{eqnarray}
We can see that equation \eqref{eq:r12} follows from \eqref{eq:r11}
and \eqref{eq:r13}, therefore we only have to deal with these two.
Let us start with equation \eqref{eq:r11}. Multiplying by $\kappa_{1}^{-1}\kappa_{2}^{-1}$
from the right:
\[
C^{(11)}\delta k_{1}^{(1)}\kappa_{1}^{-1}+C^{(11)}\delta k_{2}^{(1)}\kappa_{2}^{-1}=\delta k_{1}^{(1)}\kappa_{2}C^{(22)}\kappa_{1}^{-1}\kappa_{2}^{-1}+\kappa_{1}\delta k_{2}^{(1)}C^{(22)}\kappa_{1}^{-1}\kappa_{2}^{-1}
\]
Using \eqref{eq:Ckk} we obtain that
\[
C^{(11)}\delta k_{1}^{(1)}\kappa_{1}^{-1}+C^{(11)}\delta k_{2}^{(1)}\kappa_{2}^{-1}=\delta k_{1}^{(1)}\kappa_{1}^{-1}C^{(11)}+\delta k_{2}^{(1)}\kappa_{2}^{-1}C^{(11)}
\]
i.e.
\begin{eqnarray*}
\left[C^{(11)},\delta k_{1}^{(1)}\kappa_{1}^{-1}+\delta k_{2}^{(1)}\kappa_{2}^{-1}\right]= &  & ,\\
=B^{AB}\left[\rho^{(1)}(X_{A}),\delta k^{(1)}\kappa^{-1}\right]\otimes\rho^{(1)}(X_{B})+B^{AB}\rho^{(1)}(X_{A})\otimes\left[\rho^{(1)}(X_{B}),\delta k^{(1)}\kappa^{-1}\right] & = & 0
\end{eqnarray*}
Using the basis in $\mathrm{End}(\mathbb{C}^{d})$ related to $\rho^{(1)}$:
$\delta k^{(1)}\kappa^{-1}=Z^{A}Y_{A}^{(1)}+Z^{\bar{A}}\bar{Y}_{\bar{A}}^{(1)}$,
we obtain that
\begin{eqnarray*}
B^{AB}\left[Y_{A}^{(1)},Z^{C}Y_{C}^{(1)}+Z^{\bar{C}}\bar{Y}_{\bar{C}}^{(1)}\right]\otimes Y_{B}^{(1)}+B^{AB}Y_{A}^{(1)}\otimes\left[Y_{B}^{(1)},Z^{C}Y_{C}^{(1)}+Z^{\bar{C}}\bar{Y}_{\bar{C}}^{(1)}\right]=\\
=B^{AB}Z^{\bar{C}}f_{A\bar{C}}^{\bar{D}}\bar{Y^{(1)}}_{\bar{D}}\otimes Y_{B}^{(1)}+B^{AB}Z^{\bar{C}}f_{B\bar{C}}^{\bar{D}}Y_{A}^{(1)}\otimes\bar{Y}_{\bar{D}}^{(1)}=0.
\end{eqnarray*}
Applying the operator $1\otimes\left\langle Y_{D}^{(1)},\cdot\right\rangle _{1}$we
obtain that
\[
Z^{\bar{C}}f_{D\bar{C}}^{\bar{D}}Y_{\bar{D}}^{(1)}=\left[Y_{D}^{(1)},Z^{\bar{C}}\bar{Y}_{\bar{C}}^{(1)}\right]=0.
\]
 which implies that
\[
\left[\rho(X),Z^{\bar{C}}\bar{Y}_{\bar{C}}^{(1)}\right]=0
\]
for all $X\in\mathfrak{g}$. Because of $\rho^{(1)}$ is irreducible,
$Z^{\bar{C}}\bar{Y}_{\bar{C}}^{(1)}=c1$ where $c\in\mathbb{C}$,
therefore $\delta k^{(1)}\kappa^{-1}=Z^{A}Y_{A}^{(1)}+c1$. From \eqref{eq:norm},
we can obtain that $c=0$ which implies that $\delta k^{(1)}\kappa^{-1}\in\rho^{(1)}(\mathfrak{g})$.

Let us continue with \eqref{eq:r13}
\[
C^{(11)}\delta k_{2}^{(1)}\kappa_{2}^{-1}+\kappa_{1}C^{(21)}\kappa_{1}^{-1}\delta k_{2}^{(1)}\kappa_{2}^{-1}=\kappa_{1}\delta k_{2}^{(1)}C^{(22)}\kappa_{1}^{-1}\kappa_{2}^{-1}+\delta k_{2}^{(1)}C^{(12)}\kappa_{2}^{-1}
\]
Using \eqref{eq:Ckk} and \eqref{eq:kCk} we can obtain that
\[
C^{(11)}\delta k_{2}^{(1)}\kappa_{2}^{-1}+\kappa_{1}C^{(21)}\kappa_{1}^{-1}\delta k_{2}^{(1)}\kappa_{2}^{-1}=\delta k_{2}^{(1)}\kappa_{2}^{-1}C^{(11)}+\delta k_{2}^{(1)}\kappa_{2}^{-1}\kappa_{1}C^{(21)}\kappa_{1}^{-1}
\]
which implies that
\[
B^{AB}\left(\rho^{(1)}(X_{A})+\kappa\rho^{(2)}(X_{A})\kappa^{-1}\right)\otimes\left[\rho^{(1)}(X_{B}),\delta k^{(1)}\kappa^{-1}\right]=0.
\]
Using \eqref{eq:Adkalp}, we obtain that
\begin{eqnarray*}
B^{AB}\rho^{(1)}(X_{A}+\alpha(X_{A}))\otimes\left[\rho^{(1)}(X_{B}),\delta k^{(1)}\kappa^{-1}\right]=\\
B^{ab}\rho^{(1)}(X_{a})\otimes\left[\rho^{(1)}(X_{b}),\delta k^{(1)}\kappa^{-1}\right]=0
\end{eqnarray*}
therefore 
\[
\left[\rho^{(1)}(X),\delta k^{(1)}\kappa^{-1}\right]=0
\]
for all $X\in\mathfrak{h}$. Since $\delta k^{(1)}\kappa^{-1}\in\rho^{(1)}(\mathfrak{g})$,
$\delta k^{(1)}=0$ if $\mathfrak{h}$ is semi-simple. For reductive
$\mathfrak{h}$, the $\delta k^{(1)}\kappa^{-1}$ is an element of
the center of $\mathfrak{h}$ which is a one dimensional subspace
for the $(\mathfrak{g},\mathfrak{h})$ symmetric pair (see the classification
of symmetric pairs \cite{helgason1979differential}), therefore $K(u)$
may has a free parameter.
\end{proof}

\subsection{Comments on the classical limit}

The reason we use the ''quasi classical'' name is the following \cite{Drinfeld:1985rx}.
The YBE does not fix the norm of the R-matrix and the spectral parameter
(if $R(u)$ a solution then $c(u)R(xu)$ is also a solution for all
$x\in\mathbb{C}$ and any complex function $c(u)$) therefore one
can redefine the spectral parameter as $u\to u/\hbar$ for which
\[
R^{(ij)}(u,\hbar):=R^{(ij)}(u/\hbar)=1+\hbar r^{(ij)}(u)+\mathcal{O}(\hbar^{2}).
\]
where $r(u)$ is the \emph{classical R-matrix}
\[
r^{(ij)}(u)=\frac{1}{u}C^{(ij)}
\]
 which satisfies the \emph{classical Yang-Baxter equation}
\[
\left[r_{12}^{(12)}(u),r_{13}^{(13)}(u+v)\right]+\left[r_{12}^{(12)}(u),r_{23}^{(23)}(v)\right]+\left[r_{13}^{(13)}(u+v),r_{23}^{(23)}(v)\right]=0
\]
which is the order $\hbar^{2}$ term of the YBE. Scaling the spectral
parameter in the K-matrix similarly, we obtain that
\[
K(u,\hbar):=K(u/\hbar)=\kappa+\mathcal{O}(\hbar).
\]
The first non-trivial term of the bYBE in $\hbar$ reads as
\[
r^{(11)}(u-v)\kappa_{1}\kappa_{2}-\kappa_{1}\kappa_{2}r^{(22)}(u-v)+\kappa_{1}r^{(21)}(u+v)\kappa_{2}-\kappa_{2}r^{(12)}(u+v)\kappa_{1}=0
\]
which is the \emph{classical boundary Yang-Baxter equation} (cbYBE)
which was investigated in Proposition \ref{lem:Adk}.

However, we saw that if the residual symmetry is not semi-simple then
the K-matrix may have a free parameter ($K(u)\to K(u,a)$ where $a$
is the free parameter). These free parameter dependent solutions really
exist \cite{Gombor:2017qsy}. The asymptotic expansion of these K-matrices
can be written as
\begin{equation}
K(u,a)=\kappa+\frac{1}{u}\left(k^{(1)}+a\rho^{(1)}(X_{0})\right)\kappa+\mathcal{O}(u^{-2}),\label{eq:k1}
\end{equation}
where $X_{0}$ is the central element of $\mathfrak{h}$ and
\[
\left\langle \rho^{(1)}(X_{0}),k^{(1)}\right\rangle =0\qquad\left\langle \rho^{(1)}(X_{0}),\rho^{(1)}(X_{0})\right\rangle =1
\]
The above parameter can become an $\hbar$ dependent function: $a\to a(\hbar)$.
Using a proper function $a(\hbar)$ the K-matrix can be written as
\begin{equation}
K(u,\hbar,a(\hbar)):=K(u/\hbar,a(\hbar))=\tilde{\kappa}(u)+\mathcal{O}(\hbar)\label{eq:k2}
\end{equation}
which means that the classical limit of the K-matrix can be spectral
parameter dependent. Using this $\hbar$ expansion, the first non-trivial
term of the bYBE can be written as
\begin{eqnarray*}
r_{12}^{(11)}(u-v)\tilde{\kappa}_{1}(u)\tilde{\kappa}_{2}(v)-\tilde{\kappa}_{1}(u)\tilde{\kappa}_{2}(v)r_{12}^{(22)}(u-v)+\\
\tilde{\kappa}_{1}(u)r_{12}^{(21)}(u+v)\tilde{\kappa}_{2}(v)-\tilde{\kappa}_{2}(v)r_{12}^{(12)}(u+v)\tilde{\kappa}_{1}(u)=0
\end{eqnarray*}
which is the classical boundary Yang-Baxter equation for \emph{spectral
parameter dependent $\kappa$-matrix}. In \cite{Gombor:2018ppd},
there was derived some solutions of this equation in the defining
representations of the matrix Lie-algebras. These solutions can be
matched to the parameter dependent solutions of the bYBE.

One can classify the solutions of the general cbYBE
\begin{eqnarray}
\frac{1}{u-v}\left(C_{12}^{(11)}\tilde{\kappa}_{1}(u)\tilde{\kappa}_{2}(v)-\tilde{\kappa}_{1}(u)\tilde{\kappa}_{2}(v)C_{12}^{(22)}\right)+\label{eq:cbYBE}\\
\frac{1}{u+v}\left(\tilde{\kappa}_{1}(u)C_{12}^{(21)}\tilde{\kappa}_{2}(v)-\tilde{\kappa}_{2}(v)C_{12}^{(12)}\tilde{\kappa}_{1}(u)\right)=0
\end{eqnarray}
if we assume that $\tilde{\kappa}(u)$ is invertable in the asymtotic
limit i.e
\[
\tilde{\kappa}(u)=\kappa+\mathcal{O}(u^{-1})
\]
where $\kappa$ is invertable. This $\kappa$ also satisfies the spectral
parameter independent cbYBE: 
\[
\frac{1}{u-v}\left(C_{12}^{(11)}\kappa_{1}\kappa_{2}-\kappa_{1}\kappa_{2}C_{12}^{(22)}\right)+\frac{1}{u+v}\left(\kappa_{1}C_{12}^{(21)}\kappa_{2}-\kappa_{2}C_{12}^{(12)}\kappa_{1}\right)=0.
\]
We have seen at the proof of Proposition \ref{lem:Adk} and Corollary
\ref{cor:inv} that if $\kappa$ satisfies the equation above then
it defines an involution $\alpha$ for which $\rho^{(1)}(\alpha(X))=\mathrm{Ad}_{\kappa}(\rho^{(2)}(X))$.
Let $\mathfrak{h}$ be the invariant sub-algebra of $\alpha$. We
can fix the normalization by
\[
\mathrm{Tr}\left(\tilde{\kappa}(u)\kappa^{-1}\right)=d
\]

\begin{prop}
If \textup{$\mathfrak{h}$ is semi-simple Lie-algebra then $\tilde{\kappa}(u)=\kappa$.}
\end{prop}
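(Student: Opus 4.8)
The plan is to mimic the proof of Theorem~\ref{lem:univ}, now comparing the spectral-parameter dependent solution $\tilde{\kappa}(u)$ against the constant solution $\kappa$. Writing the expansion
\[
\tilde{\kappa}(u)=\kappa+\sum_{r\ge1}\frac{1}{u^{r}}\tilde{k}^{(r)},
\]
I would first record that the constant matrix $\kappa$ already solves the cbYBE \eqref{eq:cbYBE} --- this is precisely the content of \eqref{eq:Ckk} and \eqref{eq:kCk} --- and that the normalization $\mathrm{Tr}\left(\tilde{\kappa}(u)\kappa^{-1}\right)=d$ forces $\mathrm{Tr}\left(\tilde{k}^{(r)}\kappa^{-1}\right)=0$ for every $r\ge1$. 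The argument is then an induction on $r$: assuming $\tilde{k}^{(s)}=0$ for all $1\le s<r$, the first non-trivial correction to \eqref{eq:cbYBE} is linear in $\tilde{k}^{(r)}$ (the cross terms of order $u^{-r}v^{-s}$ are subleading), and it is organised by exactly the four spectral functions $\tfrac{u^{-r}}{u-v}$, $\tfrac{v^{-r}}{u-v}$, $\tfrac{u^{-r}}{u+v}$, $\tfrac{v^{-r}}{u+v}$ appearing in Theorem~\ref{lem:univ}, with matrix coefficients obtained from those of that proof under the replacement $\delta k^{(r)}\to\tilde{k}^{(r)}$. Hence the computation of Theorem~\ref{lem:univ} can be reused essentially verbatim.

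For $r>1$ these four functions are linearly independent, giving four separate constraints. The first reads $C^{(11)}\tilde{k}_{1}^{(r)}\kappa_{2}=\tilde{k}_{1}^{(r)}\kappa_{2}C^{(22)}$, which via \eqref{eq:Ckk} becomes $\left[C^{(11)},\tilde{k}_{1}^{(r)}\kappa_{1}^{-1}\right]=0$ and therefore $\left[\rho^{(1)}(X),\tilde{k}^{(r)}\kappa^{-1}\right]=0$ for all $X\in\mathfrak{g}$. Assuming, as in Theorem~\ref{lem:univ}, that $\rho^{(1)}$ is irreducible, Schur's lemma forces $\tilde{k}^{(r)}\kappa^{-1}$ to be a scalar, and the trace condition $\mathrm{Tr}\left(\tilde{k}^{(r)}\kappa^{-1}\right)=0$ then gives $\tilde{k}^{(r)}=0$.

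The only delicate order is $r=1$, where the four functions satisfy the single linear relation noted in Theorem~\ref{lem:univ}, leaving three independent constraints. Repeating the two-step analysis there --- first combining the $(u-v)$-channel equations to show $\tilde{k}^{(1)}\kappa^{-1}\in\rho^{(1)}(\mathfrak{g})$ (again by irreducibility and the trace normalization), then using the $(u+v)$-channel equation together with \eqref{eq:Adkalp} to collapse the sum over $\mathfrak{g}$ onto a sum over $\mathfrak{h}$ --- yields $\left[\rho^{(1)}(X),\tilde{k}^{(1)}\kappa^{-1}\right]=0$ for all $X\in\mathfrak{h}$. Thus $\tilde{k}^{(1)}\kappa^{-1}=\rho^{(1)}(Z)$ for some $Z$ in the centralizer of $\mathfrak{h}$ in $\mathfrak{g}$; since for a symmetric pair this centralizer coincides with the center of $\mathfrak{h}$, and $\mathfrak{h}$ is now semi-simple, $Z=0$ and $\tilde{k}^{(1)}=0$. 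Feeding this into the induction annihilates every coefficient and gives $\tilde{\kappa}(u)=\kappa$.

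I expect the bulk of the work to be bookkeeping: verifying order by order that the leading correction to the cbYBE genuinely reproduces the coefficient matrices of Theorem~\ref{lem:univ}, so that the cases $r>1$ and $r=1$ can be imported wholesale. The one structural input --- and the precise place where semi-simplicity of $\mathfrak{h}$ is needed --- is the $r=1$ step, whose conclusion hinges on the centralizer of $\mathfrak{h}$ in $\mathfrak{g}$ being the (now trivial) center of $\mathfrak{h}$. For merely reductive $\mathfrak{h}$ this center is one-dimensional and $\tilde{k}^{(1)}$ need not vanish, which is exactly the spectral-parameter dependent freedom exhibited in \eqref{eq:k1}--\eqref{eq:k2}.
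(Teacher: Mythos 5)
Your proof is correct and follows essentially the same route as the paper: the paper likewise writes $\tilde{\kappa}(u)=\kappa+u^{-r}k^{(r)}+\mathcal{O}(u^{-(r+1)})$ for the first nonvanishing correction, substitutes into the rescaled cbYBE to obtain exactly the four-term system of Theorem~\ref{lem:univ} with $\delta k^{(r)}$ replaced by $k^{(r)}$, and concludes $k^{(r)}=0$ by the same argument (Schur's lemma plus the trace normalization for $r>1$; the symmetric-pair centralizer step, where semi-simplicity of $\mathfrak{h}$ enters, for $r=1$). Your explicit induction and bookkeeping are simply a more detailed write-up of the paper's appeal to ``the same argument as in the proof of Theorem~\ref{lem:univ}''.
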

\begin{proof}
Let the asymptotic expansion of $\tilde{\kappa}(u)$ be
\[
\tilde{\kappa}(u)=\kappa+\frac{1}{u^{r}}k^{(r)}+\mathcal{O}(u^{-(r+1)})
\]
Substituting this into 
\begin{eqnarray}
 & \frac{x}{u-v}\left(C_{12}^{(11)}\tilde{\kappa}_{1}(u/x)\tilde{\kappa}_{2}(v/x)-\tilde{\kappa}_{1}(u/x)\tilde{\kappa}_{2}(v/x)C_{12}^{(22)}\right)+\label{eq:cbYBE-1}\\
+ & \frac{x}{u+v}\left(\tilde{\kappa}_{1}(u/x)C_{12}^{(21)}\tilde{\kappa}_{2}(v/x)-\tilde{\kappa}_{2}(v/x)C_{12}^{(12)}\tilde{\kappa}_{1}(u/x)\right) & =0\nonumber 
\end{eqnarray}
the order $x^{r+1}$ term is
\begin{eqnarray}
 & \frac{x^{r+1}}{u^{r}(u-v)} & \left(C^{(11)}k_{1}^{(r)}\kappa_{2}-k_{1}^{(r)}\kappa_{2}C^{(22)}\right)+\nonumber \\
+ & \frac{x^{r+1}}{v^{r}(u-v)} & \left(C^{(11)}\kappa_{1}k_{2}^{(r)}-\kappa_{1}k_{2}^{(r)}C^{(22)}\right)+\nonumber \\
+ & \frac{x^{r+1}}{u^{r}(u+v)} & \left(k_{1}^{(r)}C^{(21)}\kappa_{2}-\kappa_{2}C^{(12)}k_{1}^{(r)}\right)+\nonumber \\
+ & \frac{x^{r+1}}{v^{r}(u+v)} & \left(\kappa_{1}C^{(21)}k_{2}^{(r)}-k_{2}^{(r)}C^{(12)}\kappa_{1}\right)=0\label{eq:sys1}
\end{eqnarray}
Using the same argument as in the proof of Theorem \ref{lem:univ},
one can prove that $k^{(r)}=0$ which implies that $\tilde{\kappa}(u)=\kappa$.
\end{proof}
\begin{prop}
If \textup{$\mathfrak{h}$ is reductive Lie-algebra then the }asymptotic
expansion of \textup{$\tilde{\kappa}(u)$ has to be
\begin{equation}
\tilde{\kappa}(u)=\kappa+\frac{a_{0}}{u}\rho(X_{0})\kappa+\mathcal{O}(u^{-2})\label{eq:expansonkappa}
\end{equation}
where $X_{0}$ is a central element of $\mathfrak{h}$ for which $\left\langle \rho(X_{0}),\rho(X_{0})\right\rangle =1$
and $a_{0}\in\mathbb{C}$. If we fix $a_{0}$ then $\tilde{\kappa}(u$)
is unique.}
\end{prop}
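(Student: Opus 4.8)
The plan is to run, essentially verbatim, the order-by-order argument of Theorem~\ref{lem:univ} and of the preceding proposition, the only genuinely new input being the structure of the centralizer of $\mathfrak{h}$ in $\mathfrak{g}$ when $\mathfrak{h}$ is reductive rather than semi-simple. Let $r\geq 1$ be the lowest order at which $\tilde{\kappa}(u)$ deviates from $\kappa$, so that $\tilde{\kappa}(u)=\kappa+u^{-r}k^{(r)}+\mathcal{O}(u^{-(r+1)})$ with $k^{(r)}\neq 0$. Substituting into the scaled cbYBE \eqref{eq:cbYBE-1} and collecting the order-$x^{r+1}$ term reproduces precisely the linear system \eqref{eq:sys1} (with the first nonvanishing correction playing the role of $k^{(r)}$ there).

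First I would rule out $r\geq 2$. For $r>1$ the four spectral functions multiplying the brackets in \eqref{eq:sys1} are linearly independent, so each bracket vanishes separately. Treating the first bracket exactly as in Theorem~\ref{lem:univ} (rewrite it via \eqref{eq:Ckk} as the commutator $[C^{(11)},k^{(r)}\kappa^{-1}]=0$ and project with $1\otimes\left\langle Y_{C}^{(1)},\cdot\right\rangle_{1}$) gives $[\rho^{(1)}(X),k^{(r)}\kappa^{-1}]=0$ for all $X\in\mathfrak{g}$. Since $\rho^{(1)}$ is irreducible, $k^{(r)}\kappa^{-1}$ is a scalar; the normalization $\mathrm{Tr}(\tilde{\kappa}(u)\kappa^{-1})=d$ together with the tracelessness of $\rho^{(1)}(\mathfrak{g})$ forces this scalar to vanish, contradicting $k^{(r)}\neq 0$. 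This step uses only irreducibility of $\rho^{(1)}$ and is insensitive to the nature of $\mathfrak{h}$, so the first deviation must occur at $r=1$.

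At $r=1$ the four functions obey the single linear relation recorded in the proof of Theorem~\ref{lem:univ}, leaving the constraints \eqref{eq:r11}--\eqref{eq:r13} (with $k^{(1)}$ in place of $\delta k^{(1)}$), of which two are independent. As in Theorem~\ref{lem:univ}, equation \eqref{eq:r11} yields $k^{(1)}\kappa^{-1}\in\rho^{(1)}(\mathfrak{g})$ (the $\bar{V}^{(1)}$-part is killed by irreducibility and the residual scalar by the normalization), while equation \eqref{eq:r13}, after using \eqref{eq:Ckk}, \eqref{eq:kCk} and \eqref{eq:Adkalp}, yields $[\rho^{(1)}(X),k^{(1)}\kappa^{-1}]=0$ for all $X\in\mathfrak{h}$. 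Thus $k^{(1)}\kappa^{-1}$ lies in $\rho^{(1)}(\mathfrak{g})$ and centralizes $\rho^{(1)}(\mathfrak{h})$; as already argued in the proof of Theorem~\ref{lem:univ}, for reductive $\mathfrak{h}$ such an element must lie in the image of the center $\mathfrak{z}(\mathfrak{h})$, which for the symmetric pair $(\mathfrak{g},\mathfrak{h})$ is one-dimensional. Writing $X_{0}$ for its generator normalized by $\langle\rho(X_{0}),\rho(X_{0})\rangle=1$, we obtain $k^{(1)}=a_{0}\rho(X_{0})\kappa$ for some $a_{0}\in\mathbb{C}$, which is exactly \eqref{eq:expansonkappa}.

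Finally, for uniqueness once $a_{0}$ is fixed, I would compare two solutions $\tilde{\kappa}(u)$ and $\hat{\kappa}(u)$ sharing the same $\kappa$ and the same $a_{0}$; by the previous paragraph they then agree at orders $0$ and $1$. If they first differed at some order $r\geq 2$, substituting both into \eqref{eq:cbYBE-1} and subtracting gives, at leading order, the linear system \eqref{eq:sys1} for the difference $\delta k^{(r)}$ (the quadratic cross-term is of higher order, and only the $u^{0}$ part $\kappa$ of each factor contributes at this order). Since $r\geq 2$, the argument of the second paragraph applies verbatim and forces $\delta k^{(r)}=0$, a contradiction; hence the two solutions coincide as formal power series. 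The main obstacle is the single structural fact invoked in the third paragraph — that the centralizer of $\mathfrak{h}$ in $\mathfrak{g}$ reduces to the one-dimensional center $\mathfrak{z}(\mathfrak{h})$, which is the precise point where reductivity rather than semi-simplicity of $\mathfrak{h}$ enters and which is supplied by the proof of Theorem~\ref{lem:univ}.
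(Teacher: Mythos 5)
Your proof is correct and follows essentially the same route as the paper: substituting the expansion into the scaled cbYBE to obtain the system \eqref{eq:sys1}, then reusing the $r=1$ and $r>1$ arguments of Theorem~\ref{lem:univ} (irreducibility plus normalization, then centralizing $\rho^{(1)}(\mathfrak{h})$ forcing a central element of $\mathfrak{h}$) for the form of $k^{(1)}$ and for uniqueness. The only difference is organizational — you first rule out a leading deviation at $r\geq 2$ before treating $r=1$, which the paper folds into its one-line appeal to the universality argument — and this changes nothing of substance.
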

\begin{proof}
Putting
\[
\tilde{\kappa}(u)=\kappa+\frac{1}{u}k^{(1)}+\mathcal{O}(u^{-(2)})
\]
into equation \eqref{eq:cbYBE-1} we obtain the equation \eqref{eq:sys1}
for $r=1$. Using the same argument as the proof of Theorem \ref{lem:univ},
one can prove that $k^{(1)}\kappa^{-1}$ has to be a central element
of $\mathfrak{h}$ which means that $k^{(1)}=a_{0}\rho(X_{0})\kappa$.

The proof of the universality is the same as for $K(u)$.
\end{proof}
\begin{cor}
Let \textup{$\tilde{\kappa}(u,a_{0}$) be the $\kappa$-matrix with
a fixed parameter $a_{0}$ and $\tilde{\kappa}(u)=\tilde{\kappa}(u,1)$
then $\tilde{\kappa}(u,a_{0})=\tilde{\kappa}(u/a_{0},1)=\tilde{\kappa}(u/a_{0})$.}
\end{cor}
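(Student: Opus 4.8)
The plan is to combine a scaling symmetry of the general cbYBE with the uniqueness statement of the preceding proposition. Throughout I assume $a_{0}\neq0$ (the degenerate case $a_{0}=0$ gives $\tilde{\kappa}=\kappa$ and the claimed formula is vacuous there).

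First I would observe that the general cbYBE \eqref{eq:cbYBE} is invariant under the simultaneous rescaling $u\mapsto u/a_{0}$, $v\mapsto v/a_{0}$ of the two spectral parameters. Indeed, replacing $(u,v)$ by $(u/a_{0},v/a_{0})$ in the equation satisfied by $\tilde{\kappa}(u)=\tilde{\kappa}(u,1)$ turns the two prefactors into $a_{0}/(u-v)$ and $a_{0}/(u+v)$, while the matrix structures $C^{(ij)}$ carry no spectral-parameter dependence and are untouched. Hence a single overall factor $a_{0}$ can be pulled out of the whole expression, and dividing by it shows that the function $u\mapsto\tilde{\kappa}(u/a_{0},1)$ is again a solution of \eqref{eq:cbYBE}.

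Next I would check that this rescaled function carries exactly the asymptotic data that characterizes $\tilde{\kappa}(u,a_{0})$. The normalization $\mathrm{Tr}\left(\tilde{\kappa}(u/a_{0},1)\kappa^{-1}\right)=d$ is clearly preserved, and starting from the expansion of the previous proposition at $a_{0}=1$,
\[
\tilde{\kappa}(u,1)=\kappa+\frac{1}{u}\rho(X_{0})\kappa+\mathcal{O}(u^{-2}),
\]
the substitution $u\mapsto u/a_{0}$ yields
\[
\tilde{\kappa}(u/a_{0},1)=\kappa+\frac{a_{0}}{u}\rho(X_{0})\kappa+\mathcal{O}(u^{-2}),
\]
which is precisely the form \eqref{eq:expansonkappa} with parameter $a_{0}$. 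So $u\mapsto\tilde{\kappa}(u/a_{0},1)$ is a cbYBE solution with the same invertible leading term $\kappa$, the same central element $X_{0}$, and the same coefficient $a_{0}$.

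Finally I would invoke the uniqueness part of the preceding proposition: once $\kappa$ and the parameter $a_{0}$ are fixed, the $\kappa$-matrix $\tilde{\kappa}(u,a_{0})$ is unique. Since $u\mapsto\tilde{\kappa}(u/a_{0},1)$ is such a solution, the two must coincide, giving $\tilde{\kappa}(u,a_{0})=\tilde{\kappa}(u/a_{0},1)=\tilde{\kappa}(u/a_{0})$. The only point that needs care is the scaling invariance in the first step: it rests on both kernels $1/(u-v)$ and $1/(u+v)$ being homogeneous of degree $-1$ in $(u,v)$, so that a common factor of $a_{0}$ genuinely factors out of the entire equation; once this is checked, the result is an immediate consequence of the already-established uniqueness.
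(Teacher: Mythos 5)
Your proposal is correct and is exactly the argument the paper leaves implicit: the corollary is stated without proof, being an immediate consequence of the homogeneity of the kernels $1/(u-v)$, $1/(u+v)$ in the cbYBE \eqref{eq:cbYBE} (so $u\mapsto\tilde{\kappa}(u/a_{0},1)$ is again a solution) together with the uniqueness statement of the preceding proposition once $\kappa$ and the coefficient $a_{0}$ in \eqref{eq:expansonkappa} are fixed. Your verification of the rescaled asymptotics, the preserved normalization $\mathrm{Tr}\left(\tilde{\kappa}(u/a_{0})\kappa^{-1}\right)=d$, and the exclusion of $a_{0}=0$ are all sound.
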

We can connect the parameter $a_{0}$ of \eqref{eq:expansonkappa}
to the function $a(\hbar)$. From equations \eqref{eq:k1},\eqref{eq:k2}
and \eqref{eq:expansonkappa} we can obtain that
\[
a(\hbar)=\frac{1}{\hbar}\left(a_{0}+\mathcal{O}(\hbar)\right).
\]

In summary, if the residual symmetry algebra is semi-simple then the
classical $\kappa$-matrix is fixed by the symmetry, and the quantum
correction to K-matrix is fixed by this $\kappa$ and the bYBE (up
to a normalization)
\[
\kappa\Longrightarrow K(u,\hbar).
\]
However, if the residual symmetry algebra is not semi-simple but reductive
then the classical $\kappa$-matrix is not totally fixed by the symmetry
since it has a free parameter $a_{0}$, and the K-matrix has a dynamical
parameter $a(\hbar)$ which is not fixed by the $\kappa(a_{0}u)$
and the bYBE
\[
\left\{ \kappa(a_{0}u),a(\hbar)\right\} \Longrightarrow K(u,\hbar,a(\hbar)).
\]

\subsection{Not quasi classical K-matrices}

There also exist non quasi-classical solutions of the bYBE. If $\mathfrak{g}=\mathfrak{sl}(n)$
and $\rho^{(1)}=\rho^{(2)}=\rho$, where $\rho$ is the defining representation
then 
\[
K(u)=\kappa+\frac{1}{u}1
\]
is a solution for $\kappa^{2}=0$ i.e. $K(u)$ is not quasi classical.
This K-matrix is satisfies the unitary condition:
\[
K(u)K(-u)=-\frac{1}{u^{2}}1.
\]
The residual symmetry algebra is not reductive but a semi-direct sum
of a solvable and a reductive Lie-algebras
\[
\mathfrak{h}=\mathfrak{h}_{s}\oplus\mathfrak{h}_{r}
\]
where $\mathfrak{h}_{s}$ is solvable and $\mathfrak{h}_{r}$ is reductive.
Since 
\[
\left[\mathfrak{h}_{s},\mathfrak{h}_{s}\right]\subset\mathfrak{h}_{s},\qquad\left[\mathfrak{h}_{s},\mathfrak{h}_{r}\right]\subset\mathfrak{h}_{s},\qquad\left[\mathfrak{h}_{r},\mathfrak{h}_{r}\right]=\mathfrak{h}_{r},
\]
the above sum is semi-direct. With a suitable base change, $\kappa$
can always be brought into Jordan canonical form:
\[
\kappa=\left(\begin{array}{cc|c|cc|cc|c|cc}
0 & 1 &  &  &  &  &  & \\
0 & 0 &  &  &  &  &  & \\
\hline  &  & \ddots &  &  &  &  & \\
\hline  &  &  & 0 & 1 &  &  & \\
 &  &  & 0 & 0 &  &  & \\
\hline  &  &  &  &  & 0 & 0 & \\
 &  &  &  &  & 0 & 0 & \\
\hline  &  &  &  &  &  &  & \ddots\\
\hline  &  &  &  &  &  &  &  & 0 & 0\\
 &  &  &  &  &  &  &  & 0 & 0
\end{array}\right)
\]
where there are $k$ non-trivial block. The reductive sub-algebra
is $\mathfrak{h}_{r}=\mathfrak{sl}(n-2k)$. The solvable part can
be written in a direct sum of subspaces: $\mathfrak{h}_{s}=\mathfrak{h}_{2}+\mathfrak{h}_{D}+\mathfrak{h}_{+}+\mathfrak{h}_{-}$
where 
\begin{eqnarray*}
\mathfrak{h}_{2} & = & \left\{ E_{2a-1,2b}|a=1,\dots,k;b=1,\dots,k\right\} \\
\mathfrak{h}_{D} & = & \left\{ H_{a}=E_{2a-1,2a-1}+E_{2a,2a}-\frac{2}{n-2k}\sum_{i=2k+1}^{n}E_{i,i}|a=1,\dots,k\right\} \\
\mathfrak{h}_{+} & = & \left\{ E_{2a-1,i}|a=1,\dots,k;i=2k+1,\dots,n\right\} \\
\mathfrak{h}_{-} & = & \left\{ E_{i,2a}|a=1,\dots,k;i=2k+1,\dots,n\right\} 
\end{eqnarray*}
where $E_{i,j}$s are the elementary matrices: $\left(E_{i,j}\right)_{ab}=\delta_{ia}\delta_{jb}$.
The Lie-bracket of these are the following:
\begin{eqnarray*}
\left[\mathfrak{h}_{2},\mathfrak{h}_{2}\right] & = & \left[\mathfrak{h}_{2},\mathfrak{h}_{+}\right]=\left[\mathfrak{h}_{2},\mathfrak{h}_{-}\right]=\left[\mathfrak{h}_{D},\mathfrak{h}_{D}\right]=\left[\mathfrak{h}_{+},\mathfrak{h}_{+}\right]=\left[\mathfrak{h}_{-},\mathfrak{h}_{-}\right]=0\\
\left[\mathfrak{h}_{D},\mathfrak{h}_{2}\right] & \subseteq & \mathfrak{h}_{2}\\
\left[\mathfrak{h}_{D},\mathfrak{h}_{+}\right] & \subseteq & \mathfrak{h}_{+}\\
\left[\mathfrak{h}_{D},\mathfrak{h}_{-}\right] & \subseteq & \mathfrak{h}_{-}\\
\left[\mathfrak{h}_{+},\mathfrak{h}_{-}\right] & \subseteq & \mathfrak{h}_{2}
\end{eqnarray*}
Therefore $\left[\mathfrak{h}_{s},\mathfrak{h}_{s}\right]:=\mathfrak{h}_{1}=\mathfrak{h}_{2}+\mathfrak{h}_{+}+\mathfrak{h}_{-}$
and $\left[\mathfrak{h}_{1},\mathfrak{h}_{1}\right]=\mathfrak{h}_{2}$
which is a commutative Lie-algebra.

\section{K-matrices with general boundary space}

In this section we investigate K-matrices with $d_{B}>1$.
\begin{defn}
Let $K(u)=E_{ij}\otimes\Psi^{ij}(u)\in\mathrm{End}(\mathbb{C}^{d}\otimes\mathbb{C}^{d_{B}})$
where $E_{ij}$ are the elementary matrices of $\mathrm{End}(\mathbb{C}^{d})$.
The K-matrix $K(u)$ is irreducible if there is no proper invariant
subspace $\mathbb{C}^{d_{0}}\subset\mathbb{C}^{d_{B}}$ of $\Psi^{ij}(u)$
for all $i,j$.
\end{defn}
\begin{cor}
If $K(u)\in\mathrm{End}(\mathbb{C}^{d}\otimes\mathbb{C}^{d_{B}})$
is an irreducible K-matrix then $\tilde{\kappa}=\kappa\otimes1$ where
$\kappa\in\mathrm{End}(\mathbb{C}^{d})$.
\end{cor}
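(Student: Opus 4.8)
The plan is to avoid the symmetric classical limit used in Proposition~\ref{lem:Adk} and instead send a single spectral parameter to infinity, in the spirit of the $v\to\infty$ step already employed in the symmetric-pair theorem but now keeping the boundary space as a genuine tensor factor. Concretely, I would take the bYBE
\[
R_{12}^{(11)}(u-v)K_{13}(u)R_{21}^{(12)}(u+v)K_{23}(v)=K_{23}(v)R_{12}^{(12)}(u+v)K_{13}(u)R_{21}^{(22)}(u-v)
\]
and let $u\to\infty$ with $v$ held fixed. By the asymptotic expansion \eqref{eq:assymR} every R-matrix whose argument diverges tends to the identity, while $K_{13}(u)$ tends to its leading term $\tilde{\kappa}_{13}$. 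Hence the equation collapses to $\tilde{\kappa}_{13}K_{23}(v)=K_{23}(v)\tilde{\kappa}_{13}$, i.e.
\[
\left[\tilde{\kappa}_{13},K_{23}(v)\right]=0\qquad\text{for all }v .
\]
This is the key relation: it couples the leading boundary operator living on spaces $1,3$ with the full K-matrix on spaces $2,3$, two operators that share only the boundary space $3$ since the physical copies $1$ and $2$ are disjoint.

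Next I would exploit that disjointness. Writing $\tilde{\kappa}=\sum_r a_r\otimes b_r$ with $a_r\in\mathrm{End}(\mathbb{C}^{d})$ chosen linearly independent and $b_r\in\mathrm{End}(\mathbb{C}^{d_B})$, and $K(v)=E_{ij}\otimes\Psi^{ij}(v)$ as in the definition of irreducibility, the commutator becomes
\[
\left[\tilde{\kappa}_{13},K_{23}(v)\right]=\sum_{r,i,j}\left(a_r\otimes E_{ij}\right)\otimes\left[b_r,\Psi^{ij}(v)\right]=0 .
\]
Since $\{a_r\}$ is linearly independent in $\mathrm{End}(\mathbb{C}^{d})$ on space $1$ and $\{E_{ij}\}$ is a basis of $\mathrm{End}(\mathbb{C}^{d})$ on space $2$, the products $\{a_r\otimes E_{ij}\}$ are linearly independent, so every coefficient must vanish: $\left[b_r,\Psi^{ij}(v)\right]=0$ for all $r,i,j$ and all $v$.

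Finally I would invoke irreducibility. The hypothesis says the operators $\Psi^{ij}(v)$ have no common proper invariant subspace of $\mathbb{C}^{d_B}$, so they act irreducibly (equivalently, by Burnside's theorem they generate all of $\mathrm{End}(\mathbb{C}^{d_B})$). Each $b_r$ commutes with this irreducible family, so Schur's lemma forces $b_r=\lambda_r\,1$ for some $\lambda_r\in\mathbb{C}$, and therefore $\tilde{\kappa}=\sum_r\lambda_r a_r\otimes 1=\kappa\otimes 1$ with $\kappa=\sum_r\lambda_r a_r$, as claimed. I expect the only delicate point to be the bookkeeping that isolates $\left[b_r,\Psi^{ij}(v)\right]=0$ from the triple-tensor identity, i.e. making the linear-independence step fully rigorous, together with the clean appeal to irreducibility through Schur. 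It is worth noting that invertibility of $\tilde{\kappa}$ is never used, so this argument would also cover K-matrices that are not quasi-classical.
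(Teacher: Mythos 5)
Your proof is correct and is essentially the paper's own argument: the paper takes the mirror-image $v\to\infty$ limit to obtain $\left[K_{13}(u),\tilde{\kappa}_{23}\right]=0$, expands both $K(u)$ and $\tilde{\kappa}$ in the elementary-matrix basis $E_{ij}$ to isolate $\left[\Psi^{ij}(u),\psi^{kl}\right]=0$, and then concludes by irreducibility exactly as you do with Schur's lemma. The differences --- which spectral parameter is sent to infinity, and writing $\tilde{\kappa}=\sum_{r}a_{r}\otimes b_{r}$ with linearly independent $a_{r}$ instead of using the $E_{ij}$ basis --- are cosmetic, and your observation that invertibility of $\tilde{\kappa}$ is never needed applies equally to the paper's version.
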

\begin{proof}
The bYBE in the $v\to\infty$ limit reads as
\[
\left[K(u)_{13},\tilde{\kappa}_{23}\right]=0.
\]
We can write the K-matrix and the matrix $\tilde{\kappa}$ as
\begin{eqnarray*}
K(u) & = & E_{ij}\otimes\Psi^{ij}(u)\\
\tilde{\kappa} & = & E_{ij}\otimes\psi^{ij}
\end{eqnarray*}
where $E_{ij}$s are the elementary matrices of $\mathrm{End}(\mathbb{C}^{d})$
and $\Psi^{ij}(u),\psi^{ij}\in\mathrm{End}(\mathbb{C}^{d_{B}})$ for
all $i,j$. Putting this into the equation above, we obtain that
\[
\left[\Psi^{ij}(u),\psi^{kl}\right]=0
\]
for all $i,j,k,l$. Since $K(u)$ is irreducible, the matrices $\psi^{kl}$
have to be proportional to the identity therefore $\tilde{\kappa}=\kappa\otimes1$.
\end{proof}
\begin{prop}
Let $K(u)\in\mathrm{End}(\mathbb{C}^{d}\otimes\mathbb{C}^{d_{B}})$
be an irreducible K-matrix in the representation $\left(\rho^{(1)},\rho^{(2)}\right)$
then there exists a Lie-algebra involution $\alpha:\mathfrak{g}\to\mathfrak{g}$,
$\alpha^{2}=\mathrm{id}_{\mathfrak{g}}$ for which \textup{$\mathrm{Ad}_{\kappa}(\rho^{(2)}(X))=\rho^{(1)}(\alpha(X))$}.
\end{prop}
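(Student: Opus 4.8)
The goal is to reproduce the conclusion of Corollary~\ref{cor:inv} in the higher-dimensional boundary setting, so the plan is to reduce the problem to the one-dimensional case that has already been treated. The key structural input is the corollary immediately preceding this proposition: for an irreducible K-matrix, the asymptotic term factorises as $\tilde{\kappa}=\kappa\otimes 1$ with $\kappa\in\mathrm{End}(\mathbb{C}^{d})$. This is exactly what makes the boundary space invisible to the classical limit, and I would build the whole argument around it.

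First I would take the bYBE with the substitution $u\to u/x$, $v\to v/x$ (as in equation~\eqref{eq:bYBE1}) and expand in $x\to 0$, using the asymptotic expansions~\eqref{eq:assymR} of the R-matrices together with the expansion $K(u)=\tilde{\kappa}+\mathcal{O}(u^{-1})$. The leading nontrivial term reproduces the classical boundary Yang--Baxter equation, now with $\tilde{\kappa}$ in place of $\kappa$, namely
\[
\frac{x}{u-v}\left(C^{(11)}\tilde{\kappa}_{13}\tilde{\kappa}_{23}-\tilde{\kappa}_{13}\tilde{\kappa}_{23}C^{(22)}\right)+\frac{x}{u+v}\left(\tilde{\kappa}_{13}C^{(21)}\tilde{\kappa}_{23}-\tilde{\kappa}_{23}C^{(12)}\tilde{\kappa}_{13}\right)=\mathcal{O}(x^{2}),
\]
where the $C^{(ij)}$ act on the first two (the $\mathbb{C}^{d}$) tensor factors only. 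The linear independence of the two spectral-parameter functions $1/(u-v)$ and $1/(u+v)$ then splits this into the two separate constraints~\eqref{eq:Ckk} and~\eqref{eq:kCk}, exactly as in Proposition~\ref{lem:Adk}.

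Now I would insert the factorisation $\tilde{\kappa}=\kappa\otimes 1$. Because the $C^{(ij)}$ already act trivially on the boundary factor $\mathbb{C}^{d_{B}}$, every occurrence of the identity on that factor simply passes through, and both constraints collapse to the identical matrix equations
\[
C^{(11)}\kappa_{1}\kappa_{2}=\kappa_{1}\kappa_{2}C^{(22)},\qquad \kappa_{1}C^{(21)}\kappa_{2}=\kappa_{2}C^{(12)}\kappa_{1}
\]
on $\mathrm{End}(\mathbb{C}^{d}\otimes\mathbb{C}^{d})$, with $\kappa$ invertible. These are precisely~\eqref{eq:Ckk} and~\eqref{eq:kCk}. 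From here the argument of Proposition~\ref{lem:Adk} applies verbatim: equation~\eqref{eq:lemma11} shows $\mathrm{Ad}_{\kappa}$ maps $\rho^{(2)}(\mathfrak{g})$ into $\rho^{(1)}(\mathfrak{g})$ with matrix $\mathbf{M}$, the two relations force $\mathbf{N}=0$ and $\mathbf{M}^{2}=1$, and Corollary~\ref{cor:inv} then produces the Lie-algebra involution $\alpha$ with $\mathrm{Ad}_{\kappa}(\rho^{(2)}(X))=\rho^{(1)}(\alpha(X))$.

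The only genuinely new ingredient, and hence the main obstacle, is justifying the factorisation step cleanly: I must invoke the preceding corollary (which itself comes from the $v\to\infty$ limit of the bYBE and irreducibility) to guarantee that the classical limit of an irreducible K-matrix has no nontrivial boundary-space structure. Everything downstream is then a direct transcription of the $d_{B}=1$ proof, so the work is in verifying that $\kappa\otimes 1$ reduces the tensorial constraints to the scalar-boundary ones rather than in any fresh representation-theoretic estimate.
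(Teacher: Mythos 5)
Your proposal is correct and follows exactly the paper's own route: invoke the preceding corollary to get $\tilde{\kappa}=\kappa\otimes 1$, observe that the leading nontrivial ($x\to 0$) term of the bYBE then coincides with the scalar-boundary classical limit \eqref{eq:classlimit}, and conclude by the arguments of Proposition~\ref{lem:Adk} and Corollary~\ref{cor:inv}. The paper states this reduction in one line; you have merely spelled out the same steps in more detail.
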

\begin{proof}
From the previous corollary, we know that the leading order of an
irreducible K-matrix is $\kappa\otimes1$ therefore the leading non-trivial
term of the bYBE
\begin{eqnarray}
R_{12}^{(11)}\left(\frac{u-v}{x}\right)K_{13}\left(\frac{u}{x}\right)R_{21}^{(12)}\left(\frac{u+v}{x}\right)K_{23}\left(\frac{v}{x}\right) & =\label{eq:bYBEA}\\
K_{23}\left(\frac{v}{x}\right)R_{12}^{(12)}\left(\frac{u+v}{x}\right)K_{13}\left(\frac{u}{x}\right)R_{21}^{(22)}\left(\frac{u-v}{x}\right)
\end{eqnarray}
is the same as \eqref{eq:classlimit}. Therefore the proof of this
proposition is the same as it was for the scalar boundary.
\end{proof}
\begin{lem}
Let $K(u)\in\mathrm{End}(\mathbb{C}^{d}\otimes\mathbb{C}^{d_{B}})$
be an irreducible K-matrix in the representation $\left(\rho^{(1)},\rho^{(2)}\right)$
and $\alpha$ is the Lie-algebra involution for which \textup{$\mathrm{Ad}_{\kappa}(\rho^{(2)}(X))=\rho^{(1)}(\alpha(X))$}
then the asymptotic expansion of $K(u)$ has to be
\[
K(u)=\kappa\otimes1+\frac{1}{u}\left(\frac{1}{2}c^{(\mathfrak{h},1)}\otimes1+D\otimes1+2C^{(\mathfrak{h},1B)}\right)\kappa\otimes1+\mathcal{O}(u^{-2}),
\]
where $\rho^{(B)}:\mathfrak{h}\to\mathrm{End}(\mathbb{C}^{d_{B}})$
is a representation of $\mathfrak{h}$ which is the invariant subalgebra
of $\alpha$ and $D\in\mathrm{End}(\mathbb{C}^{d})$ for which $\left[D,\rho^{(1)}\left(\mathfrak{g}\right)\right]=0$.
\end{lem}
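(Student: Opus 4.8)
The plan is to fix the first subleading coefficient in $K(u)=\kappa\otimes 1+\tfrac{1}{u}k^{(1)}+\mathcal{O}(u^{-2})$ by playing the symmetry condition \eqref{eq:symK} against the boundary Yang--Baxter equation. Writing $k:=k^{(1)}(\kappa^{-1}\otimes 1)$, the claim is $k=\tfrac{1}{2}c^{(\mathfrak{h},1)}\otimes 1+D\otimes 1+2C^{(\mathfrak{h},1B)}$ with $[D,\rho^{(1)}(\mathfrak{g})]=0$, so it is enough to determine $k$.

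First I would extract the linear constraint from the symmetry condition. Expanding \eqref{eq:symK} to order $u^{-1}$, using $\kappa\rho^{(2)}(X)=\rho^{(1)}(X)\kappa$ for $X\in\mathfrak{h}$ (from the hypothesis, cf. \eqref{eq:Adkalp}) and that $1\otimes\rho^{(B)}(X)$ commutes with $\kappa\otimes 1$, one obtains
\[
\bigl[\rho^{(1)}(X)\otimes 1+1\otimes\rho^{(B)}(X),\,k\bigr]=0\qquad(X\in\mathfrak{h}).
\]
Thus $k$ is invariant under the diagonal action of $\mathfrak{h}$. This already singles out $C^{(\mathfrak{h},1B)}$, $c^{(\mathfrak{h},1)}\otimes 1$ and the commutant terms $D\otimes 1$ as the admissible invariants, and it identifies $\rho^{(B)}$ as the representation through which $\mathfrak{h}$ acts on the boundary; what remains is to show that the bYBE selects precisely this combination with these coefficients.

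The determining equation comes from the $v\to\infty$ limit of \eqref{eq:bYBEA}, exactly as in the proof of the main theorem of Section 3. Keeping $K(u)$ exact and collecting the coefficient of $v^{-1}$ gives an operator identity on $\mathbb{C}^{d}\otimes\mathbb{C}^{d}\otimes\mathbb{C}^{d_{B}}$ whose $u\to\infty$ part is the classical bYBE \eqref{eq:Ckk}--\eqref{eq:kCk} (hence automatic), while its order $u^{-1}$ part reads
\[
\bigl[(k^{(1)})_{13},(k^{(1)})_{23}\bigr]=C^{(11)}_{12}(k^{(1)})_{13}\kappa_{2}-(k^{(1)})_{13}C^{(12)}_{21}\kappa_{2}+\kappa_{2}C^{(12)}_{12}(k^{(1)})_{13}-\kappa_{2}(k^{(1)})_{13}C^{(22)}_{21}.
\]
On the right I would conjugate $C^{(12)}$ and $C^{(22)}$ by $\kappa$ and apply $\mathrm{Ad}_{\kappa}\circ\rho^{(2)}=\rho^{(1)}\circ\alpha$; the $\mathbb{Z}_{2}$ grading $\mathfrak{g}=\mathfrak{h}\oplus\mathfrak{f}$ then cancels the $\mathfrak{f}$ contributions and combines the $\mathfrak{h}$ ones, which is why only $\mathfrak{h}$ data ($c^{(\mathfrak{h},1)}$ and $C^{(\mathfrak{h},1B)}$) can survive in $k$.

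Splitting $k$ into its boundary--trivial part $T\otimes 1$ and its genuinely coupled part, I would then analyse the two projections of this identity. The boundary--coupled projection is the hard part, since there the left--hand side is genuinely quadratic in $k$: writing the coupled part as $c_{0}C^{(\mathfrak{h},1B)}$ and evaluating $[(k^{(1)})_{13},(k^{(1)})_{23}]$ through $[\rho^{(B)}(X_{b}),\rho^{(B)}(X_{d})]=f_{bd}^{e}\rho^{(B)}(X_{e})$ collapses the quadratic term into an expression linear in $\rho^{(B)}$, and matching it against the linear right--hand side both forces $c_{0}=2$ and is equivalent to $\rho^{(B)}$ closing as a genuine representation of $\mathfrak{h}$. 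The boundary--trivial projection is linear in $T$ and, after the grading step, constrains $T$ to the form $\lambda c^{(\mathfrak{h},1)}+D$ with $[D,\rho^{(1)}(\mathfrak{g})]=0$. The delicate points, and the ones demanding the most care, are controlling this quadratic coupled term and extracting the precise numerical factors $\lambda=\tfrac12$ and $2$ — which are exactly those required for self--consistency of the representation $\rho^{(B)}$ appearing in \eqref{eq:symK} — and this is where the bookkeeping across the two projections (supplemented if necessary by the analogous $u\to\infty$ relation) does the real work.
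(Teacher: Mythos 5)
Your determining equation is itself correct --- it coincides with the paper's ``$v^{4}$''-coefficient constraint, and your route to it (expand the bYBE first in $1/v$ with $K(u)$ exact, then in $1/u$) is a legitimate alternative to the paper's $x$-scaling argument --- but the argument built on it has genuine gaps. First, a circularity: you impose \eqref{eq:symK} for all $X\in\mathfrak{h}$, where $\mathfrak{h}$ is the invariant subalgebra of $\alpha$. At this point the symmetry condition is only known for the K-matrix's own residual algebra, which may a priori be strictly smaller than $\mathfrak{h}$, and the representation $\rho^{(B)}$ of the full $\mathfrak{h}$ is not given but is precisely what the lemma must construct; the identification of the residual algebra with $\mathfrak{h}$ is the content of the theorem that this lemma feeds, so it cannot be used here. (The paper's proof accordingly uses only the bYBE.) Second, you assume the answer's shape: writing $k=T\otimes1+c_{0}C^{(\mathfrak{h},1B)}$ presupposes that the boundary-coupled part couples only to $\rho^{(1)}(\mathfrak{h})$ and already through a representation, and that everything else is boundary-trivial. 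A priori $k$ also has components along $\rho^{(1)}(\mathfrak{f})$ and along $\bar{V}^{(1)}$ tensored with arbitrary boundary operators, and diagonal $\mathfrak{h}$-invariance does not exclude them --- the space of such invariants is far larger than the span of $c^{(\mathfrak{h},1)}\otimes1$, $D\otimes1$ and $C^{(\mathfrak{h},1B)}$. Organizing these extra components is most of the actual work.

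Most concretely, two of the lemma's conclusions are unreachable from your inputs. (i) After multiplying by $\kappa_{1}^{-1}\kappa_{2}^{-1}$ and using $C_{12}^{(11)}+\kappa_{1}C_{12}^{(21)}\kappa_{1}^{-1}=2C_{12}^{(\mathfrak{h},11)}$, your equation reduces to $2\bigl[C_{12}^{(\mathfrak{h},11)},\tilde{k}_{13}\bigr]+\bigl[\tilde{k}_{23},\tilde{k}_{13}\bigr]=0$ with $\tilde{k}=k^{(1)}(\kappa^{-1}\otimes1)$; its projections yield the representation property of the $\rho^{(1)}(\mathfrak{h})$-components (this is where the factor $2$ is fixed, as you say) and commutativity statements, but the only constraint it places on your boundary-trivial part $T$ is $\bigl[T,\rho^{(1)}(\mathfrak{h})\bigr]=0$, which is satisfied by $\lambda c^{(\mathfrak{h},1)}+D$ for \emph{every} $\lambda$. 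The coefficient $\frac{1}{2}$ comes in the paper from an independent constraint (the $u^{3}v$ coefficient of its polynomial identity) whose essential ingredient is the term quadratic in the $C$-matrices, $C_{12}^{(11)}\kappa_{1}C_{12}^{(21)}\kappa_{2}-\kappa_{2}C_{12}^{(12)}\kappa_{1}C_{12}^{(22)}$, combined with $\bigl[C_{12}^{(\mathfrak{f},11)},C_{12}^{(\mathfrak{h},11)}\bigr]=-\frac{1}{2}\bigl[C_{12}^{(\mathfrak{f},11)},c^{(\mathfrak{h},1)}\otimes1+1\otimes c^{(\mathfrak{h},1)}\bigr]$; since your expansion stops at first order in both $1/u$ and $1/v$, no quadratic-in-$C$ term ever appears in your scheme, so $\lambda=\frac{1}{2}$ cannot be derived. (ii) Irreducibility of $K(u)$ --- a hypothesis of the lemma --- is never used in your proposal, yet without it the conclusion is false as stated: the commutant part can take the form $D_{i}\otimes U^{i}$ with non-scalar boundary operators $U^{i}$, not $D\otimes1$. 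The paper excludes this by extracting $\bigl[1\otimes U^{i},\tilde{k}^{(r)}\bigr]=0$ for every order $r$ from higher-order terms of the bYBE, concluding $\bigl[1\otimes U^{i},K(u)\bigr]=0$, and only then invoking irreducibility; some substitute for this infinite tower of constraints is indispensable in any proof.
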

\begin{proof}
Let the asymptotic expansions be
\begin{eqnarray*}
K(u) & = & \kappa\otimes1+\frac{1}{u}k^{(1)}+\mathcal{O}(u^{-2})\\
R^{(ij)}(u) & = & 1\otimes1+\frac{1}{u}C^{(ij)}+\frac{1}{u^{2}}D^{(ij)}+\mathcal{O}(u^{-2})
\end{eqnarray*}
We can substitute this to \eqref{eq:bYBEA} and the next to the leading
order in $x$ is the following:
\begin{eqnarray}
\frac{(1)}{\left(u-v\right)^{2}}+\frac{(2)}{\left(u+v\right)^{2}}+\frac{(3)}{u^{2}-v^{2}}+\nonumber \\
+\frac{(4)}{u\left(u-v\right)}+\frac{(5)}{v\left(u-v\right)}+\frac{(6)}{u\left(u+v\right)}+\frac{(7)}{v\left(u+v\right)}+\frac{(8)}{uv} & = & 0,\label{eq:x2}
\end{eqnarray}
where 
\begin{eqnarray*}
(1) & = & D_{12}^{(11)}\kappa_{1}\kappa_{2}-\kappa_{1}\kappa_{2}D_{12}^{(22)}\\
(2) & = & \kappa_{1}D_{12}^{(21)}\kappa_{2}-\kappa_{2}D_{12}^{(12)}\kappa_{1}\\
(3) & = & C_{12}^{(11)}\kappa_{1}C_{12}^{(21)}\kappa_{2}-\kappa_{2}C_{12}^{(12)}\kappa_{1}C_{12}^{(22)}\\
(4) & = & C_{12}^{(11)}k_{13}^{(1)}\kappa_{2}-k_{13}^{(1)}\kappa_{2}C_{12}^{(22)}\\
(5) & = & C_{12}^{(11)}\kappa_{1}k_{23}^{(1)}-\kappa_{1}k_{23}^{(1)}C_{12}^{(22)}\\
(6) & = & k_{13}^{(1)}C_{12}^{(21)}\kappa_{2}-\kappa_{2}C_{12}^{(12)}k_{13}^{(1)}\\
(7) & = & \kappa_{1}C_{12}^{(21)}k_{23}^{(1)}-k_{23}^{(1)}C_{12}^{(12)}\kappa_{1}\\
(8) & = & \left[k_{13}^{(1)},k_{23}^{(1)}\right]
\end{eqnarray*}
Multipling equation \eqref{eq:x2} by $uv(u-v)^{2}(u+v)^{2}$, we
obtain that
\begin{eqnarray}
u^{4}\left((5)+(7)+(8)\right)+v^{4}\left(-(4)+(6)+(8)\right)+\nonumber \\
+u^{3}v\left((1)+(2)+(3)+(4)+(5)+(6)-(7)\right)+\nonumber \\
+uv^{3}\left((1)+(2)-(3)-(4)-(5)-(6)+(7)\right)+\nonumber \\
+u^{2}v^{2}\left(2(1)-2(2)+(4)-(5)-(6)-(7)-2(8)\right)=0\label{eq:r2}
\end{eqnarray}
Let us start with the $u^{4}$ term:
\[
C_{12}^{(11)}\kappa_{1}k_{23}^{(1)}-\kappa_{1}k_{23}^{(1)}C_{12}^{(22)}+\kappa_{1}C_{12}^{(21)}k_{23}^{(1)}-k_{23}^{(1)}C_{12}^{(12)}\kappa_{1}+\left[k_{13}^{(1)},k_{23}^{(1)}\right]=0
\]
Multipling this by $\kappa_{1}^{-1}\kappa_{2}^{-1}$ from the right.
\[
\left[C_{12}^{(11)}+\kappa_{1}C_{12}^{(21)}\kappa_{1}^{-1},k_{23}^{(1)}\kappa_{2}^{-1}\right]+\left[k_{13}^{(1)}\kappa_{1}^{-1},k_{23}^{(1)}\kappa_{2}^{-1}\right]=0
\]
Since
\[
C_{12}^{(11)}+\kappa_{1}C_{12}^{(21)}\kappa_{1}^{-1}=2C_{12}^{(\mathfrak{h},11)}
\]
we obtain that
\begin{equation}
2\left[C_{12}^{(\mathfrak{h},11)},k_{23}^{(1)}\kappa_{2}^{-1}\right]+\left[k_{13}^{(1)}\kappa_{1}^{-1},k_{23}^{(1)}\kappa_{2}^{-1}\right]=0\label{eq:x2a}
\end{equation}
Let us use the following notation:
\[
\tilde{k}^{(1)}=k^{(1)}\left(\kappa^{-1}\otimes1\right)=Y_{a}^{(1)}\otimes Z^{a}+\bar{Y}_{\bar{a}}^{(1)}\otimes\bar{Z}^{\bar{a}}
\]
where $Z^{a},\bar{Z}^{\bar{a}}\in\mathrm{End}(\mathbb{C}^{d_{B}})$.
Substituting this into \eqref{eq:x2a} we obtain that
\begin{eqnarray}
2B^{ab}Y_{a}^{(1)}\otimes\left[Y_{b}^{(1)},Y_{c}^{(1)}\right]\otimes Z^{c}+2B^{ab}Y_{a}^{(1)}\otimes\left[Y_{b}^{(1)},\bar{Y}_{\bar{c}}^{(1)}\right]\otimes\bar{Z}^{\bar{c}}+\nonumber \\
+Y_{a}^{(1)}\otimes Y_{b}^{(1)}\otimes\left[Z^{a},Z^{b}\right]+Y_{a}^{(1)}\otimes\bar{Y}_{\bar{b}}^{(1)}\otimes\left[Z^{a},\bar{Z}^{\bar{b}}\right]+\nonumber \\
+\bar{Y}_{\bar{a}}^{(1)}\otimes Y_{b}^{(1)}\otimes\left[\bar{Z}^{\bar{a}},Z^{b}\right]+\bar{Y}_{\bar{a}}^{(1)}\otimes\bar{Y}_{\bar{b}}^{(1)}\otimes\left[\bar{Z}^{\bar{a}},\bar{Z}^{\bar{b}}\right]=0\label{eq:eq1}
\end{eqnarray}
If we apply $\left\langle Y_{d}^{(1)},\cdot\right\rangle _{1}\otimes\left\langle Y_{e}^{(1)},\cdot\right\rangle _{1}\otimes\mathrm{id}$
then we get
\[
2B_{fe}f_{dc}^{f}Z^{c}+B_{ad}B_{be}\left[Z^{a},Z^{b}\right]=0
\]
where we used that $C_{ab}^{(1)}=c^{(1)}B_{ab}$. Using the definition
\[
Z_{a}=\frac{1}{2}B_{ab}Z^{b}
\]
we can obtain that
\[
\left[Z_{d},Z_{e}\right]=f_{de}^{c}Z_{c}
\]
which implies that there exists a representation ($\rho^{(B)}$) of
$\mathfrak{h}$ for which $\rho^{(B)}(X_{a})=Z_{a}$.

Appling $\left\langle \bar{Y}_{\bar{d}}^{(1)},\cdot\right\rangle _{1}\otimes\left\langle Y_{e}^{(1)},\cdot\right\rangle _{1}\otimes\mathrm{id}$
to \eqref{eq:eq1} we obtain that 
\begin{equation}
\left[\bar{Z}^{\bar{d}},Z^{e}\right]=0\label{eq:com1}
\end{equation}
and using $\left\langle \bar{Y}_{\bar{d}}^{(1)},\cdot\right\rangle _{1}\otimes\left\langle \bar{Y}_{\bar{e}}^{(1)},\cdot\right\rangle _{1}\otimes\mathrm{id}$,
we get
\begin{equation}
\left[\bar{Z}^{\bar{d}},\bar{Z}^{\bar{e}}\right]=0.\label{eq:com2}
\end{equation}
Using these, $\bar{Y}_{\bar{a}}^{(1)}\otimes\bar{Z}^{\bar{a}}$ can
be written as $\bar{Y}_{\bar{a}}^{(1)}\otimes\bar{Z}^{\bar{a}}=y^{i}\otimes U_{i}$
where $\left[U_{i},\rho^{(B)}(X)\right]=0$, $\left[U_{i},U_{j}\right]=0$,
$\left\langle U_{i},U_{j}\right\rangle _{B}=\delta_{ij}$ for all
$X\in\mathfrak{h}$ and $y^{i}\in\rho^{(1)}(\mathfrak{f})\oplus\bar{V}^{(1)}$.
Using $\left\langle Y_{d}^{(1)},\cdot\right\rangle \otimes\mathrm{id}\otimes\left\langle U_{i},\cdot\right\rangle $
on \eqref{eq:eq1}, we obtain that
\begin{equation}
\left[Y_{d}^{(1)},y^{i}\right]=0.\label{eq:yd}
\end{equation}

From the $u^{3}v$ and $uv^{3}$ terms of \eqref{eq:r2} we can obtain
that

\[
(3)+(4)+(5)+(6)-(7)=0
\]
Multiplying this with $\kappa_{1}^{-1}\kappa_{2}^{-1}$ from the right,
we get
\[
\left[C_{12}^{(\mathfrak{f},11)},C_{12}^{(\mathfrak{h},11)}\right]+\left[C_{12}^{(\mathfrak{f},11)},\tilde{k}_{13}^{(1)}+\tilde{k}_{23}^{(1)}\right]=0
\]
The first term can be written as
\[
\left[C_{12}^{(\mathfrak{f},11)},C_{12}^{(\mathfrak{h},11)}\right]=-\frac{1}{2}\left[C_{12}^{(\mathfrak{f},11)},c^{(\mathfrak{h},1)}\otimes1+1\otimes c^{(\mathfrak{h},1)}\right]
\]
Using the definition $\bar{Y}_{\bar{a}}^{(1)}\otimes\bar{Z}^{\bar{a}}=y^{i}\otimes U_{i}=\tilde{y}^{i}\otimes U_{i}+\frac{1}{2}c^{(\mathfrak{h},1)}\otimes1$
the $\tilde{k}^{(1)}$ can be written as
\[
\tilde{k}^{(1)}=\tilde{y}^{i}\otimes U_{i}+\frac{1}{2}c^{(\mathfrak{h},1)}\otimes1+2B^{ab}\rho^{(1)}\left(X_{a}\right)\otimes\rho^{(B)}\left(X_{b}\right).
\]
Substituting this into the equation above, we obtain that
\[
B^{\alpha\beta}\left[Y_{\alpha}^{(1)},\tilde{y}^{i}\right]\otimes Y_{\beta}^{(1)}\otimes U_{i}+B^{\alpha\beta}Y_{\alpha}^{(1)}\otimes\left[Y_{\beta}^{(1)},\tilde{y}^{i}\right]\otimes U_{i}=0
\]
Using $\left\langle Y_{\gamma}^{(1)},\cdot\right\rangle \otimes\mathrm{id}\otimes\left\langle U_{i},\cdot\right\rangle $,
we get
\begin{equation}
\left[Y_{\beta}^{(1)},\tilde{y}^{i}\right]=0.\label{eq:yb}
\end{equation}
From \eqref{eq:yd} and \eqref{eq:yb} we can see that $\left[\rho^{(1)}(X),\tilde{y}^{i}\right]=0$
for all $X\in\mathfrak{g}$ therefore $\tilde{y}^{i}=d^{ij}D_{j}$
where $\left[D_{j},\rho^{(1)}(X)\right]=0$ for all $X\in\mathfrak{g}$,
$\left\langle D_{i},D_{j}\right\rangle _{1}=\delta_{ij}$ and $D_{i}\in\rho^{(1)}(\mathfrak{f})\oplus\bar{V}^{(1)}$. 

In summary,
\begin{eqnarray}
\tilde{k}^{(1)}=D_{i}\otimes U^{i}+\frac{1}{2}c^{(\mathfrak{h},1)}\otimes1+2B^{ab}\rho^{(1)}\left(X_{a}\right)\otimes\rho^{(B)}\left(X_{b}\right)=\label{eq:kt}\\
=D_{i}\otimes U^{i}+\frac{1}{2}c^{(\mathfrak{h},1)}\otimes1+2C^{(\mathfrak{h},1B)}
\end{eqnarray}
where $U^{i}=d^{ji}U_{j}$.

In the following we prove that $U^{i}$ has to be proportional to
$1$. For this, we use the higher order terms of \eqref{eq:bYBEA}.
Let the asymptotic expansions be
\begin{eqnarray*}
K(u) & = & \sum_{r=0}^{\infty}\frac{1}{u^{r}}k^{(r)},\\
R^{(ij)}(u) & = & \sum_{r=0}^{\infty}\frac{1}{u^{r}}C^{(r)(ij)},
\end{eqnarray*}
where $C^{(0)(ij)}=1\otimes1$, $C^{(1)(ij)}=C^{(ij)}$, $k^{(0)}=\kappa\otimes1$.The
$x^{r+1}$ order terms of \eqref{eq:bYBEA} read as
\[
\frac{1}{u^{a}v^{b}(u-v)^{c}(u+v)^{d}}\left(C_{12}^{(c)(11)}k_{13}^{(a)}C_{12}^{(d)(21)}k_{23}^{(b)}-k_{23}^{(b)}C_{12}^{(d)(12)}k_{13}^{(a)}C_{12}^{(c)(22)}\right)
\]
where $a+b+c+d=r+1$. Because of $k^{(0)}=\kappa\otimes1$, the $a=r+1$
and $b=r+1$ terms are trivial therefore $a,b\leqq r$. Multipling
these by $u^{r}v^{r}(u-v)^{r+1}(u+v)^{r+1}$, the spectral parameter
dependencies are
\begin{eqnarray*}
u^{r-a}v^{r-b}(u-v)^{r+1-c}(u+v)^{r+1-d}\Bigl( & C_{12}^{(c)(11)}k_{13}^{(a)}C_{12}^{(d)(21)}k_{23}^{(b)}-\\
 & -k_{23}^{(b)}C_{12}^{(d)(12)}k_{13}^{(a)}C_{12}^{(c)(22)} & \Bigr)
\end{eqnarray*}
Expanding the brackets, the highest order term in $u$ is $u^{2r+1+b}v^{r-b}$.
We concentrate on the $u^{3r+1}$ terms i.e. when $b=r$. From this
we can obtain the following equation:

\[
\left(C_{12}^{(11)}\kappa_{1}k_{23}^{(r)}-k_{23}^{(r)}\kappa_{1}C_{12}^{(22)}\right)+\left(\kappa_{1}C_{12}^{(21)}k_{23}^{(r)}-k_{23}^{(r)}C_{12}^{(12)}\kappa_{1}\right)+\left[k_{13}^{(1)},k_{23}^{(r)}\right]=0
\]
Multiply this by $\kappa_{1}^{-1}\kappa_{2}^{-1}$ from the right,
we get
\[
2\left[C_{12}^{(\mathfrak{h},11)},\tilde{k}_{23}^{(r)}\right]+\left[\tilde{k}_{13}^{(1)},\tilde{k}_{23}^{(r)}\right]=0
\]
Using \eqref{eq:kt} and applying $\left\langle D_{i},\cdot\right\rangle \otimes\mathrm{id}\otimes\mathrm{id}$,
we obtain that
\begin{equation}
\left[1\otimes U^{i},\tilde{k}^{(r)}\right]=0\label{eq:temp}
\end{equation}
Since this is true for all $r$, the equation \eqref{eq:temp} implies
that
\[
\left[1\otimes U^{i},K(u)\right]=0
\]
If $K(u)$ is irreducible then $U^{i}=e^{i}1$ therefore
\begin{equation}
\tilde{k}^{(1)}=D\otimes1+\frac{1}{2}c^{(\mathfrak{h},1)}\otimes1+2C^{(\mathfrak{h},1B)}\label{eq:ktv}
\end{equation}
where $D=e^{i}D_{i}$ and we wanted to prove this.
\end{proof}
\begin{thm}
Let $K(u)$ be a quasi classical, irreducible $(\mathfrak{g},\mathfrak{h})$
symmetric K-matrix in the representation $\left(\rho^{(1)},\rho^{(2)}\right)$
then $\left(\mathfrak{g},\mathfrak{h}\right)$ is a symmetric pair.
\end{thm}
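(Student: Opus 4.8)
The plan is to mirror the proof of the analogous theorem for the one-dimensional boundary, now feeding in the involution and the explicit first-order tail supplied by the preceding Proposition and Lemma. By the preceding Proposition there is a Lie-algebra involution $\alpha$ with $\mathrm{Ad}_{\kappa}(\rho^{(2)}(X))=\rho^{(1)}(\alpha(X))$; let $\mathfrak{h}_{0}:=\mathfrak{g}^{\alpha}$ be its fixed-point subalgebra, so that $\mathfrak{g}=\mathfrak{h}_{0}\oplus\mathfrak{f}$ is the associated $\mathbb{Z}_{2}$-graded decomposition and $(\mathfrak{g},\mathfrak{h}_{0})$ is automatically a symmetric pair. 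The whole theorem then reduces to the identity $\mathfrak{h}=\mathfrak{h}_{0}$, which I would establish by the two inclusions $\mathfrak{h}\subseteq\mathfrak{h}_{0}$ and $\mathfrak{h}_{0}\subseteq\mathfrak{h}$.

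For $\mathfrak{h}\subseteq\mathfrak{h}_{0}$ I would take the defining symmetry relation \eqref{eq:symK} in the asymptotic limit $u\to\infty$. The leading term of an irreducible $K$ is $\kappa\otimes1$ (the corollary preceding the Lemma), and since $1\otimes\rho^{(B)}(X)$ commutes with $\kappa\otimes1$ the boundary commutator in \eqref{eq:symK} drops out. What survives is $(\rho^{(1)}(X)\kappa-\kappa\rho^{(2)}(X))\otimes1=0$, i.e. $\mathrm{Ad}_{\kappa}(\rho^{(2)}(X))=\rho^{(1)}(X)$ for every $X\in\mathfrak{h}$. Comparing with $\mathrm{Ad}_{\kappa}(\rho^{(2)}(X))=\rho^{(1)}(\alpha(X))$ from \eqref{eq:Adkalp} and using that $\rho^{(1)}$ is faithful gives $\alpha(X)=X$, hence $X\in\mathfrak{h}_{0}$. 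This step is identical to the scalar-boundary case.

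The substantial part is the reverse inclusion $\mathfrak{h}_{0}\subseteq\mathfrak{h}$, for which I would expand the bYBE \eqref{eq:bYBEA} in the $v\to\infty$ limit and isolate the order $v^{-1}$ terms. Here both the four R-matrix factors (each $1+\mathcal{O}(1/v)$) and the tail $K_{23}(v)=\kappa_{2}\otimes1+v^{-1}k_{23}^{(1)}+\dots$ contribute, and the key input is that the first-order coefficient $k^{(1)}$ is the explicit one found in the Lemma, \eqref{eq:ktv}. After multiplying by $\kappa^{-1}$ and rewriting every $\rho^{(2)}$ factor through $\mathrm{Ad}_{\kappa}(\rho^{(2)}(X_{B}))=\rho^{(1)}(\alpha(X_{B}))$, the auxiliary physical factor assembles into the combination $\rho^{(1)}(X_{B})+\kappa\rho^{(2)}(X_{B})\kappa^{-1}=\rho^{(1)}(X_{B}+\alpha(X_{B}))$, which projects the contraction onto $\mathfrak{h}_{0}$ exactly as in the scalar case and kills the $\mathfrak{f}$-part. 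The new feature is that the $2C^{(\mathfrak{h},1B)}$ piece of $k_{23}^{(1)}$ reconstructs precisely the boundary commutator built from the representation $\rho^{(B)}$ of $\mathfrak{h}_{0}$ produced in the Lemma. Contracting with the metric $\langle Y^{(1)}_{b},\cdot\rangle_{1}$ in the auxiliary factor and using the non-degeneracy of the Killing form on $\mathfrak{h}_{0}$ then yields, for all $X\in\mathfrak{h}_{0}$,
\[
(\rho^{(1)}(X)\otimes1)K(u)-K(u)(\rho^{(2)}(X)\otimes1)+[1\otimes\rho^{(B)}(X),K(u)]=0,
\]
which is exactly the symmetry condition \eqref{eq:symK} for $\mathfrak{h}_{0}$. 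Hence $\mathfrak{h}_{0}\subseteq\mathfrak{h}$, and together with the first inclusion $\mathfrak{h}=\mathfrak{h}_{0}=\mathfrak{g}^{\alpha}$, so $(\mathfrak{g},\mathfrak{h})$ is a symmetric pair.

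I expect the main obstacle to be the order $v^{-1}$ bookkeeping in the last step: one must track how the $C^{(\mathfrak{h},1B)}$ term of $k^{(1)}$ combines with the leading tails of the four R-matrices so that the boundary commutator emerges with the correct coefficient, while simultaneously checking that the $\mathfrak{f}$-graded contributions cancel under the $X_{B}+\alpha(X_{B})$ projection. The scalar proof is a clean guide, but the presence of the boundary representation means these cancellations are no longer automatic and have to be matched term by term against the explicit $k^{(1)}$ of \eqref{eq:ktv}.
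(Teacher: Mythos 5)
Your proposal is correct and follows essentially the same route as the paper: the same reduction to $\mathfrak{h}=\mathfrak{h}_{0}$ via the involution from the preceding Proposition, the same $u\to\infty$ limit of \eqref{eq:symK} (with the boundary commutator dropping out against $\kappa\otimes1$) for $\mathfrak{h}\subseteq\mathfrak{h}_{0}$, and the same $v\to\infty$ expansion of the bYBE using the explicit $k^{(1)}$ of \eqref{eq:ktv} to recover the symmetry condition for all $X\in\mathfrak{h}_{0}$. The bookkeeping you flag as the main obstacle is exactly what the paper carries out, with the $D\otimes1$ and $\tfrac{1}{2}c^{(\mathfrak{h},1)}\otimes1$ pieces of $k^{(1)}$ cancelling between the two sides so that only the $\mathfrak{h}_{0}$-projected terms and the commutator with $C^{(\mathfrak{h}_{0},1B)}$ survive.
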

\begin{proof}
Similarly to the scalar boundary case, the leading order of $K(u)$
defines a Lie-algebra involution $\alpha$, for which\textit{ }$\mathrm{Ad}_{\kappa}(\rho^{(2)}(X))=\rho^{(1)}(\alpha(X))$\textit{.}
This involution can be used for a $\mathbb{Z}_{2}$ graded decomposition:
$\mathfrak{g}=\mathfrak{h}_{0}\oplus\mathfrak{f}$ where $\alpha(X^{(+)})=+X^{(+)}$
and $\alpha(X^{(-)})=-X^{(-)}$ for all $X^{(+)}\in\mathfrak{h}_{0}$
and $X^{(-)}\in\mathfrak{f}$ therefore $\left(\mathfrak{g},\mathfrak{h}_{0}\right)$
is a symmetric pair.

Let us assume that $X\in\mathfrak{h}$ which implies that there exists
a representation $\tilde{\rho}$ of $\mathfrak{h}$ such that 
\[
K(u)\left(\rho^{(2)}(X)\otimes1\right)-\left(\rho^{(1)}(X)\otimes1\right)K(u)+\left[K(u),1\otimes\tilde{\rho}(X)\right]=0.
\]
Going to the $u\to\infty$ limit, we obtain that $\kappa\rho^{(2)}(X)=\rho^{(1)}(X)\kappa$
i.e. $\mathrm{Ad}_{\kappa}(\rho^{(2)}(X))=\rho^{(1)}(X)$ which means
$\alpha(X)=X$ i.e $X\in\mathfrak{h}_{0}$ therefore $\mathfrak{h}\subseteq\mathfrak{h}_{0}$.

Now, we take the $v\to\infty$ limit of the bYBE:
\begin{eqnarray*}
\frac{1}{v}\left(K_{13}(u)C_{12}^{(21)}\kappa_{2}-C_{12}^{(11)}K_{13}(u)\kappa_{2}+K_{13}(u)k_{23}^{(1)}\right)+\mathcal{O}(v^{-2}) & =\\
\frac{1}{v}\left(\kappa_{2}C_{12}^{(12)}K_{13}(u)-\kappa_{2}K_{13}(u)C_{12}^{(22)}+k_{23}^{(1)}K_{13}(u)\right)+\mathcal{O}(v^{-2})
\end{eqnarray*}
Multipling this by $\kappa_{2}^{-1}$ from the right and using \eqref{eq:ktv},
we obtain that
\[
K_{13}(u)C_{12}^{(\mathfrak{h}_{0},21)}-C_{12}^{(\mathfrak{h}_{0},11)}K_{13}(u)+\left[K_{13}(u),C_{23}^{(\mathfrak{h}_{0},1B)}\right]=0
\]
which is equivalent to
\[
K(u)\left(\rho^{(2)}(X)\otimes1\right)-\left(\rho^{(1)}(X)\otimes1\right)K(u)+\left[K(u),1\otimes\rho^{(B)}(X)\right]=0
\]
for all $X\in\mathfrak{h}_{0}$ which implies that $\mathfrak{h}_{0}\subseteq\mathfrak{h}$.
We have seen previously that $\mathfrak{h}\subseteq\mathfrak{h}_{0}$
therefore $\mathfrak{h}_{0}=\mathfrak{h}$ i.e. $\left(\mathfrak{g},\mathfrak{h}\right)$
is a symmetric pair.
\end{proof}

\section{Conclusion}

In this paper we derived directly from the boundary Yang-Baxter equation
that the possible residual symmetry algebras of the quasi-classical
K-matrices have to be invariant sub-algebras of Lie-algebra involutions.
It was also proved that if the boundary vector space is one dimensional
then these K-matrices are universal (up to a normalization) when the
residual sub-algebra is semi-simple, and for the non semi-simple ones
the K-matrices have a free parameter. In the following it might be
interesting to try to generalize these statements and proofs to the
trigonometric cases where similar classification seemed to be present
\cite{Regelskis:2016iaf,Nepomechie:2018dsn}.

In addition, a not quasi-classical K-matrix was briefly examined.
We have seen that its residual symmetry algebra is a semi-direct sum
of a reductive and a solvable Lie-algebra. For further work, the classification
of these may also be interesting. It is possible that they also play
a role in the classification of the integrable initial states of spin
chains \cite{Piroli:2018ksf,Piroli:2018don,Pozsgay:2018dzs}.

\section*{Acknowledgment}

I thank Zoltán Bajnok for the useful discussions and for reading the
manuscript. The work was supported by the NKFIH 116505 Grant.

\section*{References}

\bibliographystyle{alphaurl}
\bibliography{Kmatrix}

\end{document}